\DeclareMathOperator{\Tr}{Tr}
\DeclareMathOperator{\openone}{\mathds{1}}
\DeclareMathOperator{\brn}{\mathtt{branch}}
\DeclareMathOperator{\bnd}{\mathtt{bound}}
\DeclareMathOperator*{\argmax}{arg\,max}
\newtheorem{dfn}{Definition}
\newtheorem{lmm}{Lemma}
\newtheorem{thm}{Theorem}
\newtheorem{cor}{Corollary}
\begin{document}
\setlength{\textheight}{8.0truein}

%% \runninghead{Adversarial guesswork with quantum side information}
%%             {Baasanchimed Avirmed, Kaito Niinomi, Michele Dall'Arno}

%\normalsize\textlineskip
\thispagestyle{empty}
\setcounter{page}{1}

%\copyrightheading{0}{0}{2003}{000--000}

\vspace*{0.88truein}

%\alphfootnote

%\fpage{1}

\centerline{\bf Adversarial guesswork with quantum side information}
\vspace*{0.035truein}

\centerline{\footnotesize Baasanchimed Avirmed}

\vspace*{0.015truein}

\centerline{\footnotesize\it Department  of Computer Science
  and Engineering, Toyohashi University of Technology}

\baselineskip=10pt

\centerline{\footnotesize\it  1-1 Hibarigaoka,  Tempaku-cho,
  Toyohashi, Aichi, 441-8580, Japan}

\vspace*{10pt}

\centerline{\footnotesize Kaito Niinomi}

\vspace*{0.015truein}

\centerline{\footnotesize\it Department  of Computer Science
  and Engineering, Toyohashi University of Technology}

\baselineskip=10pt

\centerline{\footnotesize\it  1-1 Hibarigaoka,  Tempaku-cho,
  Toyohashi, Aichi, 441-8580, Japan}

\vspace*{10pt}

\centerline{\footnotesize Michele Dall'Arno}

\vspace*{0.015truein}

\centerline{\footnotesize\it Department  of Computer Science
  and Engineering, Toyohashi University of Technology}

\baselineskip=10pt

\centerline{\footnotesize\it  1-1 Hibarigaoka,  Tempaku-cho,
  Toyohashi, Aichi, 441-8580, Japan}

\baselineskip=10pt

\centerline{\footnotesize\it michele.dallarno.mv@tut.jp}

\vspace*{0.225truein}

%\publisher{(received date)}{(revised date)}

\vspace*{0.21truein}

\begin{abstract}
  The  guesswork of  a classical-quantum  channel quantifies
  the cost incurred in guessing the state transmitted by the
  channel  when only  one state  can be  queried at  a time,
  maximized over any  classical pre-processing and minimized
  over      any      quantum      post-processing.       For
  arbitrary-dimensional      covariant     classical-quantum
  channels,  we   prove  the   invariance  of   the  optimal
  pre-processing   and  the   covariance   of  the   optimal
  post-processing.   In  the  qubit  case,  we  compute  the
  optimal  guesswork  for  the  class  of  so-called  highly
  symmetric   informationally   complete   classical-quantum
  channels.
\end{abstract}

\vspace*{10pt}

%% \keywords{Quantum   guesswork,  quantum   ensemble,  quantum
%%   measurement,  quantum  hypothesis testing,  quantum  state
%%   discrimination, quadratic assignment problem}
\vspace*{3pt}
%\communicate{to be filled by the Editorial}

%\vspace*{1pt}\textlineskip

\section{Introduction}

Let us first introduce an adversarial extension of the usual
guesswork    problem     in    the    absence     of    side
information~\cite{Mas94, Ari96,  AM98a, AM98b,  MS04, Sun07,
  HS10, CD13, SV18,  Sas18}.  One party, say  Alice, is free
to  choose  a  probability  distribution $p$  for  a  random
variable $M$ over alphabet  $\mathcal{M}$, and to communicate
her choice  to the other  party, say Bob (in  the previously
considered, non-adversarial  scenario, $p$  is fixed  by the
rules of the game).  At each  round of the game, Alice picks
a  value  $m$  for  variable  $M$  at  random  according  to
distribution $p$,  and Bob queries  Alice for the  values of
random  variable $M$,  one at  a  time, until  his guess  is
correct. For instance, let us consider the case $\mathcal{M}
= \{a, b,  c\}$.  In this case, Bob's first  query could be,
say, $b$.  If  Alice answers on the negative,  then his next
query could be $a$.  Assuming this time Alice answers on the
affirmative, the round is over.

Bob chooses  the order of  his queries in order  to minimize
the cost  incurred, the  cost function  being known  to both
parties in advance and only  depending on the average number
of queries; Alice chooses the prior probability distribution
$p$ to  maximize such  a cost.   The optimal  strategies for
both Alice  and Bob  are obvious: for  Alice it  consists of
choosing $p$ as the uniform distribution over $\mathcal{M}$,
while for Bob  it consists of querying the values  of $M$ in
non-increasing order of their prior probability.

Let  us  now  introduce  the adversarial  extension  of  the
guesswork    problem    in     the    presence    of    side
information~\cite{CCWF15,  HKDW22,   DBK22,  DBK21,  Dal23},
that, most  generally, is quantum (again,  the quantum cases
so far considered  were not adversarial, that  is, the prior
$p$ was  assumed fixed by the  rules).  That is, let  us now
say    that    there     is    a    communication    channel
$\boldsymbol{\sigma}$ with random variable  $M$ as input and
quantum  states  as  output (a  classical-quantum,  or  c-q,
channel  for short),  known to  both parties.   Suppose also
that,  at each  round  of the  game, Bob  is  given a  state
$\boldsymbol{\sigma} ( m )$.

How would the optimal strategies  for both parties look like
in  this case?   Bob is  free  to perform  the most  general
quantum measurement on his state in order to get a posterior
probability distribution on $M$,  and queries Alice based on
such  a posterior  (it was  shown in  Ref.~\cite{Dal23} that
more general  strategies that  make use of  Alice's feedback
after each  query do not  help); Alice chooses the  prior in
order to antagonize such an optimal strategy.  In this case,
therefore,  the optimal  strategies  on both  sides are  not
obvious   at   all   and   depend   on   the   c-q   channel
$\boldsymbol{\sigma}$.

In  this   work,  after  formalizing  the   problem  of  the
adversarial  guesswork  with  quantum  side  information  in
Section~\ref{sec:formalization},  we  address the  arbitrary
dimensional  case  (Section~\ref{sec:arbitrary}).  We  prove
that  the  order  in  which   Alice  and  Bob  choose  their
strategies  in  irrelevant  (Lemma~\ref{lmm:minimax});  that
each  choice  of  strategy   amounts  to  a  convex  problem
(Lemma~\ref{lmm:convexity});    that    any   symmetry    of
$\boldsymbol{\sigma}$, if any, implies an analog symmetry of
Alice's       and       Bob's       optimal       strategies
(Lemma~\ref{lmm:covariance});   and   that   Bob's   optimal
strategy amounts to querying the values of $M$ in decreasing
order   of   their    posterior   probability   distribution
(Lemma~\ref{lmm:bayes}).

Then,     we      specify     to     the      qubit     case
(Section~\ref{sec:qubit}).    Using  the   fact,  shown   in
Ref.~\cite{Dal23}, that  the optimization of  Bob's strategy
can  be  reframed  as   a  combinatorial  problem  known  as
quadratic assignment~\cite{KB57, SG76,  LM88, Cel98, BCPP98,
  BCRW98}, we derive a  branch and bound~\cite{LD60, LMSK63,
  Jen99} (BB)  algorithm for the closed-form  computation of
the  guesswork  (Theorem~\ref{thm:bb}).   We apply  such  an
algorithm  to  compute  the closed-form  expression  of  the
guesswork of  the highly-symmetric, informationally-complete
(HSIC)   c-q   channels   introduced   in   Ref.~\cite{SS16}
(Corollary~\ref{cor:hsic}), and we provide an implementation
with  the  IBM quantum  computer  of  the optimal  guesswork
protocol  for   the  icosidodecahedral  HSIC   channel.   We
summarize  our results  and  discuss some  open problems  in
Section~\ref{sec:conclusion}.

\section{Formalization}
\label{sec:formalization}

We  use   standard  definitions   and  results   in  quantum
information theory~\cite{Wil17}.

For  any finite-dimensional  Hilbert space  $\mathcal{H}$ we
denote  with  $\mathcal{L}_+   (\mathcal{H})$  the  cone  of
positive semidefinite  operators on $\mathcal{H}$.

For  any   finite  set  $\mathcal{M}$  we   define  the  set
$\mathcal{D}_\mathcal{M}$ of  probability distributions over
$\mathcal{M}$ given by
\begin{align*}
  \mathcal{D}_{\mathcal{M}} :=  \left\{ p :  \mathcal{M} \to
  \left[  0,  1 \right]  \Big|  \sum_{m  \in \mathcal{M}}  p
  \left( m \right) = 1 \right\},
\end{align*}
the  set $\mathcal{C}  ( \mathcal{M},  \mathcal{H})$ of  c-q
channels given by
\begin{align*}
  \mathcal{C}  \left( \mathcal{M},  \mathcal{H} \right)  : =
  \left\{    \boldsymbol{\sigma}     :    \mathcal{M}    \to
  \mathcal{L}_+ \left( \mathcal{H} \right) \Big | \Tr \left[
    \boldsymbol{\sigma}  \left( \cdot  \right)  \right] =  1
  \right\},
\end{align*}
the   set  $\mathcal{N}_{\mathcal{M}}$   of  numberings   of
$\mathcal{M}$ given by
\begin{align*}
  \mathcal{N}_{\mathcal{M}} := \left\{  \mathbf{n} : \left\{
  1,   \dots,  \left|   \mathcal{M}  \right|   \right\}  \to
  \mathcal{M} \Big| \mathbf{n} \textrm{ bijective} \right\},
\end{align*}
and  the   set  $\mathcal{P}   (  \mathcal{N}_{\mathcal{M}},
\mathcal{H} )$ of numbering-valued measurements given by
\begin{align*}
  \mathcal{P} \left(  \mathcal{N}_{\mathcal{M}}, \mathcal{H}
  \right)      :=       \left\{      \boldsymbol{\pi}      :
  \mathcal{N}_{\mathcal{M}}    \to   \mathcal{L}_+    \left(
  \mathcal{H}    \right)    \Big|    \sum_{\mathbf{n}    \in
    \mathcal{N}_{\mathcal{M}}}    \boldsymbol{\pi}    \left(
  \mathbf{n} \right) = \openone \right\}.
\end{align*}

For  any finite  set $\mathcal{M}$,  any finite  dimensional
Hilbert    space    $\mathcal{H}$,     any    c-q    channel
$\boldsymbol{\sigma}    \in     \mathcal{C}    (\mathcal{M},
\mathcal{H})$,   and    any   numbering-valued   measurement
$\boldsymbol{\pi}         \in          \mathcal{P}         (
\mathcal{N}_{\mathcal{M}},  \mathcal{H})$,  we  denote  with
$p_{\boldsymbol{\sigma}, \boldsymbol{\pi}}$  the probability
distribution  that  the  outcome  of  $\boldsymbol{\pi}$  is
$\mathbf{n}$ and the $t$-th query is correct, that is
\begin{align*}
  &      p_{p, \boldsymbol{\sigma},      \boldsymbol{\pi}}      :
  \mathcal{N}_{\mathcal{M}}  \times  \left\{  1 ,  \dots,  |
  \mathcal{M} | \right\} \to [0, 1]\\ & \left( \mathbf{n}, t
  \right)  \mapsto  p  \left( \mathbf{n}  \left(  t  \right)
  \right)  \Tr  \left[  \boldsymbol{\pi}  \left(  \mathbf{n}
    \right) \boldsymbol{\sigma}  \left( \mathbf{n}  \left( t
    \right) \right) \right],
\end{align*}
for any  $\mathbf{n} \in \mathcal{N}_{\mathcal{M}}$  and any
$t \in  \{ 1 ,  \dots, | \mathcal{M}  | \}$. We  denote with
$q_{\boldsymbol{\sigma}, \boldsymbol{\pi}}$  the probability
distribution  that the  $t$-th  guess  is correct,  obtained
marginalizing  $p_{\boldsymbol{\sigma},  \boldsymbol{\pi}}$,
that is
\begin{align*}
  & q_{p,  \boldsymbol{\sigma}, \boldsymbol{\pi}}  : \left\{
  1,  \dots,   |  \mathcal{M}  |  \right\}   \to  \left[  0,
    1\right]\\    &   t    \mapsto   \sum_{\mathbf{n}    \in
    \mathcal{N}_{\mathcal{M}}}       p_{\boldsymbol{\sigma},
    \boldsymbol{\pi}} \left( \mathbf{n}, t \right),
\end{align*}

For any cost function $\gamma : \{ 1, \dots, | \mathcal{M} |
\}$,  the  guesswork $G^\gamma  :  \mathcal{D}_{\mathcal{M}}
\times   \mathcal{C}   (\mathcal{M},   \mathcal{H})   \times
\mathcal{P}  (  \mathcal{N}_{\mathcal{M}}, \mathcal{H})  \to
\mathbb{R}$ is given by
\begin{align*}
  &     G^{\gamma}     \left(    p,     \boldsymbol{\sigma},
  \boldsymbol{\pi}  \right) \\  :=  & \sum_{\substack{t  \in
      \left\{   1,   \dots,   \left|   \mathcal{M}   \right|
      \right\}\\ \mathbf{n}  \in \mathcal{N}_{\mathcal{M}}}}
  p \left(  \mathbf{n} \left(  t \right) \right)  \Tr \left[
    \boldsymbol{\sigma} \left(  \mathbf{n} \left(  t \right)
    \right)   \boldsymbol{\pi}  \left(   \mathbf{n}  \right)
    \right] \gamma \left( t \right),
\end{align*}
for     any     probability      distribution     $p     \in
\mathcal{D}_{\mathcal{M}}$,       any      c-q       channel
$\boldsymbol{\sigma}    \in     \mathcal{C}    (\mathcal{M},
\mathcal{H})$,   and    any   numbering-valued   measurement
$\boldsymbol{\pi}         \in          \mathcal{P}         (
\mathcal{N}_{\mathcal{M}}, \mathcal{H})$.

The      minimum     guesswork      $G_{\min}^{\gamma}     :
\mathcal{D}_{\mathcal{M}}  \times \mathcal{C}  (\mathcal{M},
\mathcal{H}) \to \mathbb{R}$ is given by
\begin{align*}
  G_{\min}^{\gamma} \left( p, \boldsymbol{\sigma} \right) :=
  \min_{\boldsymbol{\pi}     \in      \mathcal{P}     \left(
    \mathcal{N}_{\mathcal{M}},      \mathcal{H}     \right)}
  G^{\gamma} \left( p, \boldsymbol{\sigma}, \boldsymbol{\pi}
  \right),
  \end{align*}
for     any     probability      distribution     $p     \in
\mathcal{D}_{\mathcal{M}}$     and    any     c-q    channel
$\boldsymbol{\sigma}    \in     \mathcal{C}    (\mathcal{M},
\mathcal{H})$,      and      the      maximin      guesswork
$G_{\max\min}^{\gamma}    :     \mathcal{C}    (\mathcal{M},
\mathcal{H}) \to \mathbb{R}$ is given by
\begin{align}
  \label{eq:gwmaximin}
  G_{\max\min}^{\gamma} \left( \boldsymbol{\sigma} \right)
  := \max_{p \in \mathcal{D}_{\mathcal{M}}}
  G_{\min}^{\gamma} \left( p, \boldsymbol{\sigma} \right),
  \end{align}
for  any c-q  channel  $\boldsymbol{\sigma} \in  \mathcal{C}
(\mathcal{M}, \mathcal{H})$.

The  maximum guesswork  $G_{\max}^{\gamma}  : \mathcal{C}  (
\mathcal{M},    \mathcal{H}   )    \times   \mathcal{P}    (
\mathcal{N}_{\mathcal{M}},  \mathcal{H}) \to  \mathbb{R}$ is
given by
\begin{align*}
  G_{\max}^{\gamma}        \left(       \boldsymbol{\sigma},
  \boldsymbol{\pi}      \right)      :=     \max_{p      \in
    \mathcal{D}_{\mathcal{M}}}    G^{\gamma}    \left(    p,
  \boldsymbol{\sigma}, \boldsymbol{\pi} \right),
\end{align*}
for  any c-q  channel  $\boldsymbol{\sigma} \in  \mathcal{C}
(\mathcal{M},   \mathcal{H})$   and   any   numbering-valued
measurement      $\boldsymbol{\pi}      \in      \mathcal{P}
(\mathcal{N}_{\mathcal{M}},  \mathcal{H})$, and  the minimax
guesswork $G_{\min\max}^{\gamma} : \mathcal{C} (\mathcal{M},
\mathcal{H}) \to \mathbb{R}$ given by
\begin{align}
  \label{eq:gwminimax}
  G_{\min\max}^{\gamma}  \left( \boldsymbol{\sigma}  \right)
  :=    \min_{\boldsymbol{\pi}   \in    \mathcal{P}   \left(
    \mathcal{N}_{\mathcal{M}},      \mathcal{H}     \right)}
  G_{\max}^{\gamma}        \left(       \boldsymbol{\sigma},
  \boldsymbol{\pi} \right),
\end{align}
for  any c-q  channel  $\boldsymbol{\sigma} \in  \mathcal{C}
(\mathcal{M}, \mathcal{H})$.

\section{Main results}

\subsection{Arbitrary dimensional case}
\label{sec:arbitrary}

The following lemma shows  that, without loss of generality,
we     can     focus     on    the     maximin     guesswork
$G_{\max\min}^{\gamma}$ only.

\begin{lmm}[Maximin]
  \label{lmm:minimax}
  For any  finite set $\mathcal{M}$,  any finite-dimensional
  Hilbert    space    $\mathcal{H}$,   any    c-q    channel
  $\boldsymbol{\sigma}   \in   \mathcal{C}  (   \mathcal{M},
  \mathcal{H} )$, and any function $\gamma : \{ 1, \dots , |
  \mathcal{M}     |    \}$,     the    maximin     guesswork
  $G_{\max\min}^{\gamma}  ( \boldsymbol{\sigma}  )$ and  the
  minimax      guesswork       $G_{\min\max}^{\gamma}      (
  \boldsymbol{\sigma} )$ are equivalent, that is
  \begin{align*}
    G_{\max\min}^{\gamma} \left( \boldsymbol{\sigma} \right)
    =   G_{\min\max}^{\gamma}   \left(   \boldsymbol{\sigma}
    \right)
  \end{align*}
\end{lmm}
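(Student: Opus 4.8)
The plan is to recognize this as a saddle-point equality and invoke a minimax theorem, once the right structural features of $G^\gamma$ are exposed. The crucial observation is that, for fixed $\boldsymbol{\sigma}$, the guesswork is \emph{bilinear} in its two remaining arguments. To make this transparent I would regroup the defining sum by numbering,
\begin{align*}
  G^\gamma \left( p, \boldsymbol{\sigma}, \boldsymbol{\pi} \right)
  = \sum_{\mathbf{n} \in \mathcal{N}_{\mathcal{M}}}
  \Tr \left[ \left( \sum_{t = 1}^{| \mathcal{M} |}
  \gamma \left( t \right) p \left( \mathbf{n} \left( t \right) \right)
  \boldsymbol{\sigma} \left( \mathbf{n} \left( t \right) \right) \right)
  \boldsymbol{\pi} \left( \mathbf{n} \right) \right],
\end{align*}
so that $G^\gamma$ is visibly linear in $p$ for each fixed $\boldsymbol{\pi}$, and linear in $\boldsymbol{\pi}$ for each fixed $p$. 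In particular it is jointly continuous and, being linear, is simultaneously concave and convex in each variable separately.

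Next I would verify that the two optimization domains are nonempty, convex, and compact. The set $\mathcal{D}_{\mathcal{M}}$ is the probability simplex over the finite set $\mathcal{M}$, hence convex and compact. The set $\mathcal{P} ( \mathcal{N}_{\mathcal{M}}, \mathcal{H})$ is convex, since a convex combination of numbering-valued measurements again has positive semidefinite elements summing to $\openone$, and it is a closed, bounded subset of the finite-dimensional real vector space of tuples of self-adjoint operators on $\mathcal{H}$, hence compact.

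With these facts in place, the claim follows directly from a minimax theorem: because $G^\gamma$ is bilinear and both domains are convex and compact, either von Neumann's minimax theorem or the more general theorem of Sion applies, giving
\begin{align*}
  \max_{p \in \mathcal{D}_{\mathcal{M}}}
  \min_{\boldsymbol{\pi} \in \mathcal{P} ( \mathcal{N}_{\mathcal{M}}, \mathcal{H})}
  G^\gamma \left( p, \boldsymbol{\sigma}, \boldsymbol{\pi} \right)
  = \min_{\boldsymbol{\pi} \in \mathcal{P} ( \mathcal{N}_{\mathcal{M}}, \mathcal{H})}
  \max_{p \in \mathcal{D}_{\mathcal{M}}}
  G^\gamma \left( p, \boldsymbol{\sigma}, \boldsymbol{\pi} \right),
\end{align*}
which, after unfolding the definitions of $G_{\min}^\gamma$ and $G_{\max}^\gamma$, is precisely the asserted identity $G_{\max\min}^\gamma ( \boldsymbol{\sigma}) = G_{\min\max}^\gamma ( \boldsymbol{\sigma})$. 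I do not expect a genuine obstacle: the argument is entirely structural. The only point demanding even minor care is confirming compactness of $\mathcal{P} ( \mathcal{N}_{\mathcal{M}}, \mathcal{H})$, which guarantees that the outer extrema are attained and may legitimately be written as $\max$ and $\min$ rather than $\sup$ and $\inf$; in finite dimension this is routine.
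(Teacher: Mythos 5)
Your proposal is correct and follows essentially the same route as the paper: the paper's proof likewise invokes von Neumann's minimax theorem after observing that $\mathcal{D}_{\mathcal{M}}$ and $\mathcal{P}(\mathcal{N}_{\mathcal{M}}, \mathcal{H})$ are compact (and convex) and that $G^{\gamma}(\cdot, \boldsymbol{\sigma}, \cdot)$ is bilinear over these sets. Your version merely spells out the bilinear regrouping and the compactness check in more detail than the paper does.
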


\begin{proof}
  The  statement  immediately  follows  from  von  Neumann's
  minimax    theorem    by    observing   that    the    set
  $\mathcal{D}_{\mathcal{M}}$  of probability  distributions
  and  the  set  $\mathcal{P}  (  \mathcal{N}_{\mathcal{M}},
  \mathcal{H}  )$   of  numbering-valued   measurements  are
  compact  and  that  the  guesswork  $G^{\gamma}  (  \cdot,
  \boldsymbol{\rho}, \cdot )$ is bilinear over such sets.
\end{proof}

Next,  we prove  the  convexity and  the  covariance of  the
maximization problem  over probability distributions  in the
right  hand  side  of Eq.~\eqref{eq:gwmaximin}  and  of  the
minimization problem  over numbering-valued  measurements in
the  right hand  side of  Eq.~\eqref{eq:gwminimax}.  Despite
sharing  these  properties, we  will  show  that the  former
problem is ``easy'' while the latter problem is ``hard'', in
terms of finding a closed-form  solution as well as in terms
of complexity class.

The following lemma shows  that the maximization problem in
the  right hand  side  of  Eq.~\eqref{eq:gwmaximin} and  the
minimization   problem   in   the  right   hand   side   of
Eq.~\eqref{eq:gwminimax} are convex programming problems.

\begin{lmm}[Convexity]
  \label{lmm:convexity}
  For any  finite set $\mathcal{M}$,  any finite-dimensional
  Hilbert    space    $\mathcal{H}$,   any    c-q    channel
  $\boldsymbol{\sigma}    \in   \mathcal{C}    (\mathcal{M},
  \mathcal{H})$, and any function $\gamma  : \{ 1, \dots , |
  \mathcal{M}     |    \}$,     the    minimum     guesswork
  $G_{\min}^{\gamma}  ( \cdot,  \boldsymbol{\sigma} )$  is a
  concave function over  the set $\mathcal{D}_{\mathcal{M}}$
  of  probability distributions  and  the maximum  guesswork
  $G_{\max}^{\gamma}  ( \boldsymbol{\sigma},  \cdot )$  is a
  convex    function   over    the   set    $\mathcal{P}   (
  \mathcal{N}_{\mathcal{M}},      \mathcal{H}     )$      of
  numbering-valued measurements.
\end{lmm}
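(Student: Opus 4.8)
The plan is to prove the two convexity claims separately, exploiting the fact that the guesswork $G^\gamma$ is affine (bilinear) in each of its arguments $p$ and $\boldsymbol{\pi}$ when the other is held fixed. I want to sketch how I would prove Lemma~\ref{lmm:convexity}.

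The plan is to prove the two claims separately, but both rest on a single structural observation about the guesswork functional that is already implicit in the proof of Lemma~\ref{lmm:minimax}: for fixed c-q channel $\boldsymbol{\sigma}$, the map $G^\gamma(p, \boldsymbol{\sigma}, \boldsymbol{\pi})$ is bilinear in the pair $(p, \boldsymbol{\pi})$. First I would verify each of the two partial-linearity statements directly from the explicit formula. Holding $\boldsymbol{\pi}$ fixed, the defining sum $\sum_{t, \mathbf{n}} p(\mathbf{n}(t)) \Tr[\boldsymbol{\sigma}(\mathbf{n}(t)) \boldsymbol{\pi}(\mathbf{n})] \gamma(t)$ is manifestly a linear (hence affine) function of $p \in \mathcal{D}_{\mathcal{M}}$, since each summand depends on $p$ only through the single evaluation $p(\mathbf{n}(t))$, with a coefficient that does not involve $p$. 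Symmetrically, holding $p$ fixed, the same sum is linear in $\boldsymbol{\pi}$, because the trace is linear in its argument $\boldsymbol{\pi}(\mathbf{n})$ and the outer summation over $\mathbf{n}$ is itself linear.

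Next I would invoke the two standard facts from convex analysis: the pointwise infimum of an arbitrary family of affine functions is concave, and the pointwise supremum of an arbitrary family of affine functions is convex. Applying the first fact, $G_{\min}^\gamma(\cdot, \boldsymbol{\sigma}) = \min_{\boldsymbol{\pi}} G^\gamma(\cdot, \boldsymbol{\sigma}, \boldsymbol{\pi})$ is an infimum, indexed by $\boldsymbol{\pi} \in \mathcal{P}(\mathcal{N}_{\mathcal{M}}, \mathcal{H})$, of functions each affine in $p$, and is therefore concave on $\mathcal{D}_{\mathcal{M}}$. Applying the second fact, $G_{\max}^\gamma(\boldsymbol{\sigma}, \cdot) = \max_{p} G^\gamma(p, \boldsymbol{\sigma}, \cdot)$ is a supremum, indexed by $p \in \mathcal{D}_{\mathcal{M}}$, of functions each affine in $\boldsymbol{\pi}$, and is therefore convex on $\mathcal{P}(\mathcal{N}_{\mathcal{M}}, \mathcal{H})$. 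The compactness of both $\mathcal{D}_{\mathcal{M}}$ and $\mathcal{P}(\mathcal{N}_{\mathcal{M}}, \mathcal{H})$, already noted in the proof of Lemma~\ref{lmm:minimax}, guarantees that the extremizing $\boldsymbol{\pi}$ and $p$ are attained, so the $\min$ and $\max$ in the definitions are legitimate; but I would stress that attainment is not actually needed for the conclusion, which holds for arbitrary infima and suprema of affine families.

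I expect no serious obstacle here: the result is a direct consequence of the bilinear structure of $G^\gamma$, and the only care required is bookkeeping — confirming that ``affine in $p$'' and ``affine in $\boldsymbol{\pi}$'' hold term by term, and correctly matching infimum-of-affine to concavity and supremum-of-affine to convexity (a frequent source of sign slips). If a fully self-contained argument were wanted in place of citing the convex-analysis facts, I would expand the concave case inline: for $p = \lambda p_0 + (1-\lambda) p_1$, linearity of $G^\gamma(\cdot, \boldsymbol{\sigma}, \boldsymbol{\pi})$ in its first argument gives $G^\gamma(p, \boldsymbol{\sigma}, \boldsymbol{\pi}) = \lambda G^\gamma(p_0, \boldsymbol{\sigma}, \boldsymbol{\pi}) + (1-\lambda) G^\gamma(p_1, \boldsymbol{\sigma}, \boldsymbol{\pi}) \geq \lambda G_{\min}^\gamma(p_0, \boldsymbol{\sigma}) + (1-\lambda) G_{\min}^\gamma(p_1, \boldsymbol{\sigma})$ for every $\boldsymbol{\pi}$, and minimizing the left-hand side over $\boldsymbol{\pi}$ (the right-hand side being independent of $\boldsymbol{\pi}$) yields the concavity inequality; the convex case is the same computation with the inequalities reversed and $\min$ replaced by $\max$.
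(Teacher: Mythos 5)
Your proposal is correct and follows essentially the same route as the paper: both rest on the linearity of $G^{\gamma}$ in $p$ for fixed $\boldsymbol{\pi}$ and in $\boldsymbol{\pi}$ for fixed $p$, combined with the fact that an infimum of affine functions is concave and a supremum is convex (the paper writes this out as the explicit two-point inequality, which is exactly your inline expansion). As a side remark, you attribute each half to the correct partial linearity, whereas the paper's prose labels them the other way around even though its displayed computations agree with yours.
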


\begin{proof}
  The first part of the statement immediately follows by the
  linearity  of $G^{\gamma}$  over  the  set $\mathcal{P}  (
  \mathcal{N}_{\mathcal{M}},      \mathcal{H}     )$      of
  numbering-valued  measurements.    Indeed,  for   any  two
  probability       distributions       $p,      q       \in
  \mathcal{D}_{\mathcal{M}}$ and  any probabilities $\lambda
  \in [0, 1]$ and $\mu := 1 - \lambda$ one has
  \begin{align*}
    &   G_{\min}^{\gamma}  \left(   \lambda  p   +  \mu   q,
    \boldsymbol{\sigma} \right)\\ = & \min_{\boldsymbol{\pi}
      \in   \mathcal{P}  \left(   \mathcal{N}_{\mathcal{M}},
      \mathcal{H} \right)} G^{\gamma} \left( \lambda p + \mu
    q, \boldsymbol{\sigma},  \boldsymbol{\pi} \right)\\  = &
    \min_{\boldsymbol{\pi}     \in    \mathcal{P}     \left(
      \mathcal{N}_{\mathcal{M}}, \mathcal{H} \right)} \left[
      \lambda  G^{\gamma}   \left(  p,  \boldsymbol{\sigma},
      \boldsymbol{\pi}  \right) +  \mu G^{\gamma}  \left( q,
      \boldsymbol{\sigma},      \boldsymbol{\pi}     \right)
      \right]\\  \ge  & \lambda  \min_{\boldsymbol{\pi}  \in
      \mathcal{P}      \left(     \mathcal{N}_{\mathcal{M}},
      \mathcal{H}    \right)}     G^{\gamma}    \left(    p,
    \boldsymbol{\sigma},  \boldsymbol{\pi}   \right)  +  \mu
    \min_{\boldsymbol{\pi}     \in    \mathcal{P}     \left(
      \mathcal{N}_{\mathcal{M}},     \mathcal{H}    \right)}
    G^{\gamma}      \left(      q,      \boldsymbol{\sigma},
    \boldsymbol{\pi}    \right)     \\    =     &    \lambda
    G_{\min}^{\gamma} \left(  p, \boldsymbol{\sigma} \right)
    +  \mu G_{\min}^{\gamma}  \left( q,  \boldsymbol{\sigma}
    \right).
  \end{align*}

  The second  part of  the statement immediately  follows by
  the    linearity   of    $G^{\gamma}$    over   the    set
  $\mathcal{D}_{\mathcal{M}}$ of  probability distributions.
  Indeed,   for   any  two   numbering-valued   measurements
  $\boldsymbol{\pi},  \boldsymbol{\tau}  \in  \mathcal{P}  (
  \mathcal{N}_{\mathcal{M}},   \mathcal{H}    )$   and   any
  probabilities  $\lambda  \in  [0,  1]$ and  $\mu  :=  1  -
  \lambda$ one has
  \begin{align*}
    & G_{\max}^{\gamma}  \left( \boldsymbol{\sigma}, \lambda
    \boldsymbol{\pi} +  \mu \boldsymbol{\tau} \right)\\  = &
    \max_{p \in \mathcal{D}_{\mathcal{M}}} G^{\gamma} \left(
    p, \boldsymbol{\sigma},  \lambda \boldsymbol{\pi}  + \mu
    \boldsymbol{\tau}    \right)\\   =    &   \max_{p    \in
      \mathcal{D}_{\mathcal{M}}}  \left[ \lambda  G^{\gamma}
      \left(   p,    \boldsymbol{\sigma},   \boldsymbol{\pi}
      \right)     +     \mu     G^{\gamma}     \left(     p,
      \boldsymbol{\sigma},     \boldsymbol{\tau}     \right)
      \right]\\     \le     &    \lambda     \max_{p     \in
      \mathcal{D}_{\mathcal{M}}}   G^{\gamma}    \left(   p,
    \boldsymbol{\sigma},  \boldsymbol{\pi}   \right)  +  \mu
    \max_{p \in \mathcal{D}_{\mathcal{M}}} G^{\gamma} \left(
    p, \boldsymbol{\sigma}, \boldsymbol{\tau} \right) \\ = &
    \lambda  G_{\max}^{\gamma}  \left(  \boldsymbol{\sigma},
    \boldsymbol{\pi} \right) +  \mu G_{\max}^{\gamma} \left(
    \boldsymbol{\sigma}, \boldsymbol{\tau} \right).
  \end{align*}
\end{proof}

It is relevant to  compare the computational complexities of
the   two    problems   in   the   right    hand   side   of
Eq.~\eqref{eq:gwmaximin}  and~\eqref{eq:gwminimax}.  On  the
one hand, the size of the  problem in the right hand side of
Eq.~\eqref{eq:gwmaximin} grows linearly  with $| \mathcal{M}
|$; hence, its computational complexity  class is P.  On the
other hand, the  size of the problem in the  right hand side
of  Eq.~\eqref{eq:gwminimax}   grows  factorially   with  $|
\mathcal{M}  |$   (since  $|\mathcal{N}_{\mathcal{M}}   |  =
|\mathcal{M}|!$); even in the qubit case, such a problem has
been proven~\cite{Dal23} to be  a particular instance of the
quadratic assignment problem~\cite{Cel98} (QAP), a well-know
NP-hard combinatorial problem.

Let us turn  now to the symmetric case.  For  any finite set
$\mathcal{M}$,          the         symmetric          group
$\mathcal{S}_{\mathcal{M}}$ is given by
\begin{align*}
  \mathcal{S}_{\mathcal{M}} :=  \left\{ g :  \mathcal{M} \to
  \mathcal{M} \Big| g \textrm{ bijective} \right\}.
\end{align*}

For any  finite-dimensional Hilbert space  $\mathcal{H}$ and
any  c-q  channel  $\boldsymbol{\sigma}  \in  \mathcal{C}  (
\mathcal{M},  \mathcal{H}  )$,  we  say  that  a  map  $R  :
\mathcal{L}_+   (   \mathcal{H}   )  \to   \mathcal{L}_+   (
\mathcal{H})$  is  a  statistical  morphism~\cite{Bus12}  of
$\boldsymbol{\sigma}$ if  and only  if for any  discrete set
$\mathcal{N}$  and  any  measurement  $\boldsymbol{\pi}  \in
\mathcal{P}  (  \mathcal{N},  \mathcal{H} )$,  there  exists
measurement    $\boldsymbol{\tau}    \in    \mathcal{P}    (
\mathcal{N}, \mathcal{H} )$ such that
\begin{align}
  \label{eq:morphism}
  \Tr  \left[  \boldsymbol{\pi}  \left( n  \right)  R  \circ
    \boldsymbol{\sigma}  \left(  m  \right)  \right]  =  \Tr
  \left[     \boldsymbol{\tau}      \left(     n     \right)
    \boldsymbol{\sigma} \left( m \right) \right],
\end{align}
for any $m \in \mathcal{M}$ and any $n \in \mathcal{N}$.

For       any       group       $\mathcal{G}       \subseteq
\mathcal{S}_{\mathcal{M}}$,  we  say   that  a  c-q  channel
$\boldsymbol{\sigma}   \in    \mathcal{C}   (   \mathcal{M},
\mathcal{H})$  is  $\mathcal{G}$-covariant  if and  only  if
there  exists  a representation  $\mathcal{R}  :=  \{ R_g  :
\mathcal{L} ( \mathcal{H} )  \to \mathcal{L} ( \mathcal{H} )
\}$ of $\mathcal{G}$, where  $R_g$ is a statistical morphism
of $\boldsymbol{\sigma}$  for any $g \in  \mathcal{G}$, such
that
\begin{align}
  \label{eq:equivariance}
  R_g \circ \boldsymbol{\sigma}  = \boldsymbol{\sigma} \circ
  g.
\end{align}

For       any       group       $\mathcal{G}       \subseteq
\mathcal{S}_{\mathcal{M}}$  and any  $\mathcal{G}$-covariant
c-q channel $\boldsymbol{\sigma}$, we say that $\mathcal{G}$
is transitive iff there  exists one such representation such
that its action on  $\boldsymbol{\sigma}$ is transitive, and
we say that $\boldsymbol{\sigma}$ is centrally symmetric (CS
for short) iff there exists one such representation and a $g
\in \mathcal{G}$ such that $R_g( \cdot  ) = 2 \openone/d - (
\cdot )$, where $d$ denotes  the Hilbert space dimension, in
which case we introduce the short-hand notation $\overline{(
  \cdot )} := g ( \cdot )$.

The following lemma shows  that the probability distribution
attaining   the  maximum   in   the  right   hand  side   of
Eq.~\eqref{eq:gwmaximin}     and    the     numbering-valued
measurement attaining the minimum in  the right hand side of
Eq.~\eqref{eq:gwminimax}  share the  same symmetries  as the
c-q channel.

\begin{lmm}[Covariant case]
  \label{lmm:covariance}
  For any  finite set $\mathcal{M}$,  any finite-dimensional
  Hilbert  space   $\mathcal{H}$,  any   group  $\mathcal{G}
  \subseteq           \mathcal{S}_\mathcal{M}$,          any
  $\mathcal{G}$-covariant  c-q channel  $\boldsymbol{\sigma}
  \in  \mathcal{C}  (  \mathcal{M}, \mathcal{H})$,  and  any
  function $\gamma  : \{ 1, \dots  , | \mathcal{M} |  \} \to
  \mathbb{R}$,    there     exist    $\mathcal{G}$-invariant
  probability distribution $p \in \mathcal{D}_{\mathcal{M}}$
  and $\mathcal{G}$-covariant  measurement $\boldsymbol{\pi}
  \in \mathcal{P} (\mathcal{N}_{\mathcal{M}}, \mathcal{H} )$
  that  attain $G_{\max\min}^{\gamma}  ( \boldsymbol{\sigma}
  )$.
\end{lmm}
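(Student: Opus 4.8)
The plan is to produce the claimed optimizers by \emph{symmetrizing} arbitrary optimizers over $\mathcal{G}$, combining the concavity/convexity of Lemma~\ref{lmm:convexity} with the fact that the guesswork is unaffected by the symmetry. Fix the natural left actions $(g\cdot p)(m):=p(g^{-1}(m))$ on $\mathcal{D}_{\mathcal{M}}$ and $g\cdot\mathbf{n}:=g\circ\mathbf{n}$ on $\mathcal{N}_{\mathcal{M}}$, and set $v:=G_{\max\min}^{\gamma}(\boldsymbol{\sigma})$. The single computation behind everything is a transport identity: for every $g\in\mathcal{G}$ and every $\boldsymbol{\pi}$ there is a measurement $\boldsymbol{\pi}^{g}$ with $G^{\gamma}(g\cdot p,\boldsymbol{\sigma},\boldsymbol{\pi})=G^{\gamma}(p,\boldsymbol{\sigma},\boldsymbol{\pi}^{g})$ for all $p$. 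To obtain it I would relabel the summation index $\mathbf{n}\mapsto g\circ\mathbf{n}$ in $G^{\gamma}(g\cdot p,\boldsymbol{\sigma},\boldsymbol{\pi})$, use the covariance $R_{g}\circ\boldsymbol{\sigma}=\boldsymbol{\sigma}\circ g$ to rewrite $\boldsymbol{\sigma}(g(\mathbf{n}(t)))=R_{g}(\boldsymbol{\sigma}(\mathbf{n}(t)))$, and then invoke the statistical-morphism property of $R_{g}$ applied to the reindexed measurement $\mathbf{n}\mapsto\boldsymbol{\pi}(g\circ\mathbf{n})$ to define $\boldsymbol{\pi}^{g}$ through
\[
  \Tr\left[\boldsymbol{\pi}(g\circ\mathbf{n})\,R_{g}(\boldsymbol{\sigma}(m))\right]=\Tr\left[\boldsymbol{\pi}^{g}(\mathbf{n})\,\boldsymbol{\sigma}(m)\right],\qquad\forall m,\mathbf{n}.
\]

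For the distribution side, since the family $\{\boldsymbol{\pi}^{g}\}$ is contained in the set of all measurements and $p\mapsto g\cdot p$ is a bijection of $\mathcal{D}_{\mathcal{M}}$, taking the minimum over $\boldsymbol{\pi}$ in the transport identity gives $G_{\min}^{\gamma}(g\cdot p,\boldsymbol{\sigma})\ge G_{\min}^{\gamma}(p,\boldsymbol{\sigma})$; running the same step with $g^{-1}$ (a statistical morphism, since $\mathcal{R}$ is a representation) yields the reverse inequality, so $G_{\min}^{\gamma}(\cdot,\boldsymbol{\sigma})$ is $\mathcal{G}$-invariant. Let $p^{*}$ attain $v$ and set $\bar p:=\frac{1}{|\mathcal{G}|}\sum_{g}g\cdot p^{*}$, which is $\mathcal{G}$-invariant by construction. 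Concavity (Lemma~\ref{lmm:convexity}) together with this invariance gives $G_{\min}^{\gamma}(\bar p,\boldsymbol{\sigma})\ge\frac{1}{|\mathcal{G}|}\sum_{g}G_{\min}^{\gamma}(g\cdot p^{*},\boldsymbol{\sigma})=v$, and since $v$ is the maximum this is an equality, so $\bar p$ is a $\mathcal{G}$-invariant optimizer.

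For the measurement side, Lemma~\ref{lmm:minimax} gives $G_{\min\max}^{\gamma}(\boldsymbol{\sigma})=v$; let $\boldsymbol{\pi}^{*}$ attain it. For each $g$ the transport identity, combined again with the bijection $p\mapsto g\cdot p$, yields $G_{\max}^{\gamma}(\boldsymbol{\sigma},(\boldsymbol{\pi}^{*})^{g})=\max_{p}G^{\gamma}(g\cdot p,\boldsymbol{\sigma},\boldsymbol{\pi}^{*})=G_{\max}^{\gamma}(\boldsymbol{\sigma},\boldsymbol{\pi}^{*})=v$. Averaging $\bar{\boldsymbol{\pi}}:=\frac{1}{|\mathcal{G}|}\sum_{g}(\boldsymbol{\pi}^{*})^{g}$, convexity (Lemma~\ref{lmm:convexity}) gives $G_{\max}^{\gamma}(\boldsymbol{\sigma},\bar{\boldsymbol{\pi}})\le v$, hence equality, so $\bar{\boldsymbol{\pi}}$ is optimal. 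It remains to verify that $\bar{\boldsymbol{\pi}}$ is $\mathcal{G}$-covariant, i.e.\ $\Tr[\bar{\boldsymbol{\pi}}(h\circ\mathbf{n})R_{h}(\boldsymbol{\sigma}(m))]=\Tr[\bar{\boldsymbol{\pi}}(\mathbf{n})\boldsymbol{\sigma}(m)]$ for all $h$. Using covariance to rewrite the transported state $R_{h}(\boldsymbol{\sigma}(m))=\boldsymbol{\sigma}(h(m))$ as a genuine channel state, applying the defining relation of $(\boldsymbol{\pi}^{*})^{g}$ at numbering $h\circ\mathbf{n}$ and state $\boldsymbol{\sigma}(h(m))$, and invoking covariance once more together with the group law yields the consistency relation $\Tr[(\boldsymbol{\pi}^{*})^{g}(h\circ\mathbf{n})R_{h}(\boldsymbol{\sigma}(m))]=\Tr[(\boldsymbol{\pi}^{*})^{gh}(\mathbf{n})\boldsymbol{\sigma}(m)]$; summing over $g$ and relabeling $g\mapsto gh$ (a bijection of $\mathcal{G}$) collapses the left-hand side to $\Tr[\bar{\boldsymbol{\pi}}(\mathbf{n})\boldsymbol{\sigma}(m)]$.

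The main obstacle is exactly this covariance of the averaged measurement. The statistical-morphism property only furnishes each $(\boldsymbol{\pi}^{*})^{g}$ as an existence statement rather than as a canonical linear dual action, so one cannot simply commute $R_{h}$ through the average; the consistency relation above is what repairs this, and it is the only place where the full representation structure $R_{g}\circ R_{h}=R_{gh}$ is used and where channel-covariance is invoked a second time, precisely to keep every transported state inside the family $\{\boldsymbol{\sigma}(m)\}$ on which the statistical-morphism identities are guaranteed. Everything else—bilinearity of $G^{\gamma}$, the concavity and convexity of Lemma~\ref{lmm:convexity}, and the minimax identity of Lemma~\ref{lmm:minimax}—enters only routinely.
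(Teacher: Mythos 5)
Your proof is correct and follows the same overall strategy as the paper: symmetrize an arbitrary optimizer over $\mathcal{G}$ and use the concavity/convexity of Lemma~\ref{lmm:convexity} to show the average is still optimal. The one place where you genuinely diverge is the measurement half. The paper defines the symmetrized measurement directly as $\boldsymbol{\tau}(\cdot) := |\mathcal{G}|^{-1}\sum_{g} R_g^{-1}\circ\boldsymbol{\pi}(g\circ\cdot)$ and reads off the operator identity $R_g\circ\boldsymbol{\tau}=\boldsymbol{\tau}\circ g$; this presupposes that $R_g^{-1}$ acts on measurement operators and returns a valid POVM, which the statistical-morphism definition (an existence statement tested only against the states $\boldsymbol{\sigma}(m)$) does not by itself license. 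You instead transport $\boldsymbol{\pi}^{*}$ through the existence clause of Eq.~\eqref{eq:morphism}, which keeps every $(\boldsymbol{\pi}^{*})^{g}$ manifestly inside $\mathcal{P}(\mathcal{N}_{\mathcal{M}},\mathcal{H})$, and you then recover covariance of the average via the consistency relation $\Tr[(\boldsymbol{\pi}^{*})^{g}(h\circ\mathbf{n})R_{h}(\boldsymbol{\sigma}(m))]=\Tr[(\boldsymbol{\pi}^{*})^{gh}(\mathbf{n})\boldsymbol{\sigma}(m)]$ and a relabeling of the group sum. The price is that you obtain covariance only at the level of outcome statistics against the channel states, i.e.\ $\Tr[\bar{\boldsymbol{\pi}}(h\circ\mathbf{n})R_{h}(\boldsymbol{\sigma}(m))]=\Tr[\bar{\boldsymbol{\pi}}(\mathbf{n})\boldsymbol{\sigma}(m)]$, rather than the operator identity the paper asserts; the benefit is that your argument is valid under the stated hypotheses, whereas the paper's version implicitly assumes $R_g$ has a positivity- and normalization-preserving inverse action on POVMs. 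Since the covariance is only ever used downstream through the statistics of $\boldsymbol{\pi}$ against $\boldsymbol{\sigma}$ (e.g.\ in the symmetry reduction of Theorem~\ref{thm:bb}), your weaker form suffices, and your remark identifying this as the crux is well taken.
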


\begin{proof}
  Let us  prove the  first part of  the statement.   For any
  probability          distribution          $p          \in
  \mathcal{D}_{\mathcal{M}}$, upon  defining the probability
  distribution $q \in \mathcal{D}_{\mathcal{M}}$ given by $q
  := |  \mathcal{G} |^{-1} \sum_{g \in  \mathcal{G}} p \circ
  g$,  one has  $G_{\min}^{\gamma} (p,  \boldsymbol{\sigma})
  \le G_{\min}^{\gamma} (q,  \boldsymbol{\sigma})$. This can
  be seen as follows:
  \begin{align*}
    G_{\min}^{\gamma} \left(  p, \boldsymbol{\sigma} \right)
    &   =   \frac1{\left|\mathcal{G}\right|}   \sum_{g   \in
      \mathcal{G}}      G_{\min}^{\gamma}     \left(      p,
    \boldsymbol{\sigma}    \circ    g    \right)\\    &    =
    \frac1{\left|\mathcal{G}\right|}       \sum_{g       \in
      \mathcal{G}} G_{\min}^{\gamma} \left(  p \circ g^{-1},
    \boldsymbol{\sigma}         \right)\\        &         =
    \frac1{\left|\mathcal{G}\right|}       \sum_{g       \in
      \mathcal{G}}  G_{\min}^{\gamma}  \left(   p  \circ  g,
    \boldsymbol{\sigma}  \right)\\  & \le  G_{\min}^{\gamma}
    \left(  \frac1{\left| \mathcal{G}  \right|} \sum_{g  \in
      \mathcal{G}}    p    \circ   g,    \boldsymbol{\sigma}
    \right)\\    &    =    G_{\min}^{\gamma}    \left(    q,
    \boldsymbol{\sigma} \right).
  \end{align*}
  where the first  equality follows from the  fact that, due
  to  Eqs.~\eqref{eq:morphism}  and~\eqref{eq:equivariance},
  $G_{\min}^{\gamma} \left( p, \boldsymbol{\sigma} \right) =
  G_{\min}^{\gamma}  \left( p,  \boldsymbol{\sigma} \circ  g
  \right)$ for any $g  \in \mathcal{G}$, the second equality
  follows by  direct inspection, the third  equality follows
  from the group structure  of $\mathcal{G}$, the inequality
  follows    from   the    concavity   of    the   guesswork
  $G_{\min}^{\gamma}$       in      $p$       proven      in
  Lemma~\ref{lmm:convexity}, and the  final equality follows
  by definition of $q$.  By definition of $q$ it immediately
  follows  that  $q$  is   invariant  under  the  action  of
  $\mathcal{G}$, that  is, for  any $g \in  \mathcal{G}$ one
  has $q \circ g = q$.

  Let us  prove the second  part of the statement.   For any
  numbering-valued    measurement   $\boldsymbol{\pi}    \in
  \mathcal{P} (\mathcal{M}, \mathcal{H})$, upon defining the
  numbering-valued   measurement    $\boldsymbol{\tau}   \in
  \mathcal{P}    (\mathcal{M},   \mathcal{H})$    given   by
  $\boldsymbol{\tau}  (  \cdot  ) :=  |  \mathcal{G}  |^{-1}
  \sum_{g \in  \mathcal{G}} R_g^{-1}  \circ \boldsymbol{\pi}
  (g    \circ    \cdot)$,   one    has    $G_{\max}^{\gamma}
  (\boldsymbol{\sigma},         \boldsymbol{\pi})        \ge
  G_{\max}^{\gamma}          (          \boldsymbol{\sigma},
  \boldsymbol{\tau})$.

  This can be seen as follows:
  \begin{align*}
    G_{\max}^{\gamma}       \left(      \boldsymbol{\sigma},
    \boldsymbol{\pi}          \right)           &          =
    \frac1{\left|\mathcal{G}\right|}       \sum_{g       \in
      \mathcal{G}}  G_{\max}^{\gamma} \left(  R_g^{-1} \circ
    \boldsymbol{\sigma}     \circ    g,     \boldsymbol{\pi}
    \right)\\  & =  \frac1{\left|\mathcal{G}\right|} \sum_{g
      \in     \mathcal{G}}      G_{\max}^{\gamma}     \left(
    \boldsymbol{\sigma},  R_g \circ  \boldsymbol{\pi} \left(
    g^{-1}    \circ   \cdot    \right)    \right)\\   &    =
    \frac1{\left|\mathcal{G}\right|}       \sum_{g       \in
      \mathcal{G}}          G_{\max}^{\gamma}         \left(
    \boldsymbol{\sigma},  R_g^{-1}   \circ  \boldsymbol{\pi}
    \left(   g  \circ   \cdot   \right)   \right)\\  &   \ge
    G_{\max}^{\gamma}       \left(      \boldsymbol{\sigma},
    \frac1{\left|    \mathcal{G}   \right|}    \sum_{g   \in
      \mathcal{G}} R_g^{-1} \circ  \boldsymbol{\pi} \left( g
    \circ  \cdot  \right)  \right)\\ &  =  G_{\max}^{\gamma}
    \left( \boldsymbol{\sigma}, \boldsymbol{\tau} \right).
  \end{align*}
  where the first  equality follows from the  fact that, due
  to  Eqs.~\eqref{eq:morphism}  and~\eqref{eq:equivariance},
  $G_{\max}^{\gamma}       \left(       \boldsymbol{\sigma},
  \boldsymbol{\pi}   \right)   =  G_{\max}^{\gamma}   \left(
  R_g^{-1}  \boldsymbol{\sigma}  \circ  g,  \boldsymbol{\pi}
  \right)$ for any $g  \in \mathcal{G}$, the second equality
  follows by  direct inspection, the third  equality follows
  from the group structure  of $\mathcal{G}$, the inequality
  follows from the convexity  of the guesswork $G_{\max}$ in
  $\boldsymbol{\pi}$  proven  in  Lemma~\ref{lmm:convexity},
  and   the  final   equality  follows   by  definition   of
  $\boldsymbol{\tau}$.  By definition of $\boldsymbol{\tau}$
  it   immediately  follows   that  $\boldsymbol{\tau}$   is
  covariant under the action  of $\mathcal{G}$, that is, for
  any   $g    \in   \mathcal{G}$   one   has    $R_g   \circ
  \boldsymbol{\tau} = \boldsymbol{\tau} \circ g$.
\end{proof}

Let us consider  the case of a transitive  symmetry. In this
case, it is possible to  provide the closed-form solution to
the  maximization   problem  in  the  right   hand  side  of
Eq.~\eqref{eq:gwmaximin}  by noticing  that  the only  fully
invariant   probability   distribution    is   the   uniform
probability  distribution; it  is  not  possible however  in
general to solve in  closed-form the minimization problem in
the right hand side of Eq.~\eqref{eq:gwminimax} owing to the
operatorial structure of measurements.

Having shown that,  of the two optimization  problems we are
considering (maximization over probability distributions and
minimization over numbering-valued measurements), the former
is  ``easy'' and  the  latter is  ``hard''  (in the  precise
meanings discussed above), we focus  in the following on the
latter.

It will  be convenient  to restrict to  non--increasing cost
functions,  formalizing   the  expectation  that   the  cost
increases  with the  number of  queries needed  to correctly
guess.   That  this  restriction   comes  with  no  loss  of
generality  can be  shown as  follows.  For  any finite  set
$\mathcal{M}$  and  any  function  $\gamma  :  \{  1,  \dots
|\mathcal{M}|\}    \to    \mathbb{R}$,   let    us    define
$\overleftarrow{\gamma}  :=  \gamma   \circ  \sigma$,  where
$\sigma  :  \{1,  \dots  |\mathcal{M}|  \}  \to  \{1,  \dots
|\mathcal{M}|   \}$    is   any   permutation    such   that
$\overleftarrow{\gamma}$ is non  increasing.  It immediately
follows  (see Lemma~2  of Ref.~\cite{Dal23})  that, for  any
finite-dimensional  Hilbert  space  $\mathcal{H}$,  any  c-q
channel $\boldsymbol{\sigma} \in  \mathcal{C} ( \mathcal{M},
\mathcal{H} )$, and  any function $\gamma : \{ 1,  \dots , |
\mathcal{M} | \}$, one has
\begin{align*}
  G_{\min}^\gamma  \left( p,  \boldsymbol{\sigma} \right)  =
  G_{\min}^{\overleftarrow{\gamma}}         \left(        p,
  \boldsymbol{\sigma} \right).
\end{align*}
Moreover, if  numbering-valued measurement $\boldsymbol{\pi}
\in \mathcal{P}  ( \mathcal{M},  \mathcal{H} )$  attains the
minimum guesswork $G_{\min}^\gamma  ( p, \boldsymbol{\sigma}
)$,  then  numbering-valued measurement  $\boldsymbol{\pi}'(
\cdot  )  :=  \boldsymbol{\pi}  (\cdot  \circ  \sigma^{-1})$
attains           the            minimum           guesswork
$G_{\min}^{\overleftarrow{\gamma}} (  p, \boldsymbol{\sigma}
)$. Hence,  in the following  without loss of  generality we
may assume  whenever needed that the  cost function $\gamma$
is non decreasing.

The   following   lemma   (which  generalizes   Lemma~4   of
Ref.~\cite{DBK22}  to the  case of  arbitrary cost  function
$\gamma$)  formalizes through  Bayes  theorem the  intuition
that, for non decreasing cost function $\gamma$, the optimal
strategy for Bob  implies querying the values of  $M$ in the
order of their non increasing posterior probability.

\begin{lmm}[Bayes]
  \label{lmm:bayes}
  For  any discrete  set $\mathcal{M}$,  any non  decreasing
  cost function $\gamma  : \{ 1, \dots  |\mathcal{M}| \} \to
  \mathbb{R}$,   any   probability   distribution   $p   \in
  \mathcal{D}_{\mathcal{M}}$, any finite dimensional Hilbert
  space     $\mathcal{H}$,    and     any    c-q     channel
  $\boldsymbol{\sigma}    \in   \mathcal{C}    (\mathcal{M},
  \mathcal{H})$,   a   measurement   $\boldsymbol{\pi}   \in
  \mathcal{P}    (\mathcal{N}_{\mathcal{M}},   \mathcal{H})$
  minimizes        the       guesswork,        that       is
  $G_{\textrm{min}}^{\gamma}  ( p,  \boldsymbol{\sigma} )  =
  G^{\gamma} ( p,  \boldsymbol{\sigma}, \boldsymbol{\pi} )$,
  only    if     $p_{\boldsymbol{\rho},    \boldsymbol{\pi}}
  (\mathbf{n},   \cdot  )$   is  not   increasing  for   any
  $\mathbf{n} \in \mathcal{N}( \mathcal{M} )$.
\end{lmm}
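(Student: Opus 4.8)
The plan is to prove the contrapositive by a pairwise-exchange (rearrangement) argument: if for some numbering the sequence $p_{p,\boldsymbol{\sigma},\boldsymbol{\pi}}(\mathbf{n},\cdot)$ fails to be non increasing, then $\boldsymbol{\pi}$ admits a competitor with no larger guesswork, and in fact strictly smaller, so $\boldsymbol{\pi}$ cannot attain $G_{\min}^\gamma$. First I would rewrite the guesswork so as to isolate the dependence on the numbering. For a fixed positive semidefinite operator $E$, set $w_E(m) := p(m)\Tr[\boldsymbol{\sigma}(m)E]$, which is nonnegative and linear in $E$; then by definition $p_{p,\boldsymbol{\sigma},\boldsymbol{\pi}}(\mathbf{n},t) = w_{\boldsymbol{\pi}(\mathbf{n})}(\mathbf{n}(t))$ and
\[
  G^\gamma(p,\boldsymbol{\sigma},\boldsymbol{\pi})
  = \sum_{\mathbf{n}\in\mathcal{N}_{\mathcal{M}}}\sum_{t=1}^{|\mathcal{M}|}
  w_{\boldsymbol{\pi}(\mathbf{n})}(\mathbf{n}(t))\,\gamma(t).
\]
In this form each numbering contributes a pairing of the weights $w_{\boldsymbol{\pi}(\mathbf{n})}(\mathbf{n}(\cdot))$ against the cost $\gamma$, which is exactly the setting of the rearrangement inequality: with $\gamma$ non decreasing, such a pairing is minimized when the weights are arranged in non increasing order.

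Next I would make the exchange explicit. Suppose $\boldsymbol{\pi}$ attains $G_{\min}^\gamma(p,\boldsymbol{\sigma})$ but there is a numbering $\mathbf{n}$ and positions $t<t'$ with $w_E(\mathbf{n}(t)) < w_E(\mathbf{n}(t'))$, where $E := \boldsymbol{\pi}(\mathbf{n})$. Let $\tau$ be the transposition of the positions $t$ and $t'$, so that $\mathbf{n}\circ\tau$ is the numbering obtained from $\mathbf{n}$ by interchanging the labels at those two positions. Define $\boldsymbol{\pi}'$ by transferring the operator $E$ from $\mathbf{n}$ to $\mathbf{n}\circ\tau$, that is $\boldsymbol{\pi}'(\mathbf{n}):=0$, $\boldsymbol{\pi}'(\mathbf{n}\circ\tau):=\boldsymbol{\pi}(\mathbf{n}\circ\tau)+E$, and $\boldsymbol{\pi}'(\mathbf{m}):=\boldsymbol{\pi}(\mathbf{m})$ for every other $\mathbf{m}$. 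Since we only move positive weight between two numberings, $\boldsymbol{\pi}'$ takes positive semidefinite values and still sums to $\openone$, so $\boldsymbol{\pi}'\in\mathcal{P}(\mathcal{N}_{\mathcal{M}},\mathcal{H})$ is a legitimate numbering-valued measurement.

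Finally I would compute the effect on the guesswork. Because $w$ is linear in its operator argument, the contributions of every numbering other than $\mathbf{n}$ and $\mathbf{n}\circ\tau$ are untouched, the $\boldsymbol{\pi}(\mathbf{n}\circ\tau)$ terms cancel, and only the relabelling of $E$ survives, giving
\[
  G^\gamma(p,\boldsymbol{\sigma},\boldsymbol{\pi}') - G^\gamma(p,\boldsymbol{\sigma},\boldsymbol{\pi})
  = \bigl[w_E(\mathbf{n}(t')) - w_E(\mathbf{n}(t))\bigr]\bigl[\gamma(t) - \gamma(t')\bigr].
\]
The first bracket is positive by the assumed inversion and the second is non positive since $t<t'$ and $\gamma$ is non decreasing; hence the difference is at most $0$, and it is strictly negative whenever $\gamma(t)<\gamma(t')$, contradicting the minimality of $\boldsymbol{\pi}$. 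The point requiring the most care is the equal-cost case $\gamma(t)=\gamma(t')$, in which the exchange leaves $G^\gamma$ unchanged rather than strictly improving it: here I would observe that interchanging labels across positions of equal cost never affects the guesswork, so the necessary condition is to be understood up to such ties and becomes the clean monotonicity statement when $\gamma$ is strictly increasing. Verifying that $\boldsymbol{\pi}'$ is a valid POVM and that the telescoping cancellation is exact is routine; the conceptual crux is merely recognizing each per-numbering term as a rearrangement pairing against the non decreasing cost.
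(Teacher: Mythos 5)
Your argument is correct and lands on the right statement, but it takes a genuinely different route from the paper's. The paper does not argue by contradiction on a single inversion: for an arbitrary $\boldsymbol{\pi}$ it constructs in one shot a fully sorted competitor, picking for each numbering $\mathbf{n}$ a permutation $g_{\mathbf{n}}$ that sorts $p_{p,\boldsymbol{\sigma},\boldsymbol{\pi}}(\mathbf{n},\cdot)$ into non-increasing order, setting $f(\mathbf{n}):=\mathbf{n}\circ g_{\mathbf{n}}$, and coarse-graining $\boldsymbol{\pi}'(\mathbf{n}'):=\sum_{\mathbf{n}\in f^{-1}[\mathbf{n}']}\boldsymbol{\pi}(\mathbf{n})$; it then compares the marginals $q_{\boldsymbol{\sigma},\boldsymbol{\pi}}$ and $q_{\boldsymbol{\sigma},\boldsymbol{\pi}'}$ via a majorization (partial-sum) inequality, which together with the monotonicity of $\gamma$ yields $G^{\gamma}(p,\boldsymbol{\sigma},\boldsymbol{\pi}')\le G^{\gamma}(p,\boldsymbol{\sigma},\boldsymbol{\pi})$. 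Your single-transposition move (zeroing $\boldsymbol{\pi}(\mathbf{n})$ and adding it to $\boldsymbol{\pi}(\mathbf{n}\circ\tau)$) is precisely the elementary exchange underlying that majorization step, and your two-bracket identity for the change in $G^{\gamma}$ is a clean, local substitute for the paper's global computation; what the paper's version buys in exchange is the explicit existence of an optimal measurement that is sorted for \emph{every} numbering, which is the form later invoked in the proof of Theorem~\ref{thm:bb}, whereas your contrapositive would have to be iterated to recover that. On the tie issue you flag: you are right that the exchange is only strictly improving when $\gamma(t)<\gamma(t')$, so your contradiction only excludes inversions across positions where $\gamma$ strictly increases; but this is not a gap relative to the paper, whose ``equality if and only if'' step is loose in exactly the same way (for constant $\gamma$ every measurement is optimal, so the lemma as literally stated can only hold up to such ties). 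Your explicit caveat is, if anything, more careful than the paper's treatment.
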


\begin{proof}
  We  show  that,  for   any  numbering  valued  measurement
  $\boldsymbol{\pi}                                      \in
  \mathcal{P}(\mathcal{N}_{\mathcal{M}},      \mathcal{H})$,
  there    exists    a    numbering    valued    measurement
  $\boldsymbol{\pi}'                                     \in
  \mathcal{P}(\mathcal{N}_{\mathcal{M}},  \mathcal{H})$ such
  that      the     probability      distribution     $p_{p,
    \boldsymbol{\sigma},   \boldsymbol{\pi}'}   (\mathbf{n},
  \cdot)$ is  not increasing  for any  numbering $\mathbf{n}
  \in  \mathcal{N}_{\mathcal{M}}$   and  $G^{\gamma}   (  p,
  \boldsymbol{\sigma}, \boldsymbol{\pi}' )  \le G^{\gamma} (
  p, \boldsymbol{\sigma}, \boldsymbol{\pi} )$, with equality
  if    and    only     if    $p_{p,    \boldsymbol{\sigma},
    \boldsymbol{\pi}}    (\mathbf{n},    \cdot)   =    p_{p,
    \boldsymbol{\sigma},   \boldsymbol{\pi}'}   (\mathbf{n},
  \cdot)$    for     any    numbering     $\mathbf{n}    \in
  \mathcal{N}_{\mathcal{M}}$.  Let $\{ g_{\mathbf{n}} : \{1,
  \dots, | \mathcal{M} | \}  \to \{1, \dots, | \mathcal{M} |
  \}   \;   |    \;   g_{\mathbf{n}}   \textrm{   bijective}
  \}_{\mathbf{n}   \in   \mathcal{N}_{\mathcal{M}}}$  be   a
  numbering  indexed family  of permutations  such that  the
  probability   distribution   $p_{p,   \boldsymbol{\sigma},
    \boldsymbol{\pi}}  (  \mathbf{n}, g_{\mathbf{n}}(  \cdot
  ))$ is  not increasing  for any numbering  $\mathbf{n} \in
  \mathcal{N}_{   \mathcal{M}}$.    Let    function   $f   :
  \mathcal{N}_{\mathcal{M}}  \to  \mathcal{N}_{\mathcal{M}}$
  be given by the composition
  \begin{align*}
    f   \left(  \mathbf{n}   \right)  :=   \mathbf{n}  \circ
    g_{\mathbf{n}},
  \end{align*}
  for       any        numbering       $\mathbf{n}       \in
  \mathcal{N}_{\mathcal{M}}$.   Let   the  numbering  valued
  measurement     $\boldsymbol{\pi}'    \in     \mathcal{P}(
  \mathcal{N}_{\mathcal{M}}, \mathcal{H}  ) $ be  the coarse
  graining      of     numbering      valued     measurement
  $\boldsymbol{\pi}$ given by
  \begin{align*}
    \boldsymbol{\pi}'   \left(    \mathbf{n}'   \right)   :=
    \sum_{\mathbf{n}  \in f^{-1}\left[  \mathbf{n}' \right]}
    \boldsymbol{\pi} \left( \mathbf{n} \right),
  \end{align*}
  for       any       numbering       $\mathbf{n}'       \in
  \mathcal{N}_{\mathcal{M}}$, where $f^{-1} [ \mathbf{n}' ]$
  denotes the counter-image of $\mathbf{n}'$ with respect to
  $f$.  By direct computation, one has that
  \begin{align*}
    q_{\boldsymbol{\sigma},   \boldsymbol{\pi}'}  \left(   t
    \right)       &       =      \sum_{\mathbf{n}'       \in
      \mathcal{N}_{\mathcal{M}}} \sum_{\mathbf{n} \in f^{-1}
      \left[  \mathbf{n}'  \right]}   p  \left(  \mathbf{n}'
    \left( t \right)  \right) \Tr \left[ \boldsymbol{\sigma}
      \left(   \mathbf{n}'   \left(    t   \right)   \right)
      \boldsymbol{\pi}  \left(  \mathbf{n}  \right)  \right]
    \\ & = \sum_{\mathbf{n} \in \mathcal{N}_{\mathcal{M}}} p
    \left(  f \left(  \mathbf{n}  \right)  \left( t  \right)
    \right)   \Tr   \left[  \boldsymbol{\sigma}   \left(   f
      \left(\mathbf{n}  \right)  \left(  t  \right)  \right)
      \boldsymbol{\pi}  \left(  \mathbf{n}  \right)  \right]
    \\ &  = \sum_{\mathbf{n}  \in \mathcal{N}_{\mathcal{M}}}
    p_{p, \boldsymbol{\sigma},      \boldsymbol{\pi}}      \left(
    \mathbf{n}, g_{\mathbf{n}} \left( t \right) \right),
  \end{align*}
  for  any  $t   \in  \{1,  \dots,  |   \mathcal{M}  |  \}$.
  Therefore, by  construction one  has that  the probability
  distribution $q_{\boldsymbol{\rho},  \boldsymbol{\pi}}$ is
  majorized     by      the     probability     distribution
  $q_{\boldsymbol{\rho}, \boldsymbol{\pi}'}$, that is
  \begin{align*}
    \sum_{t    \in   \left\{    1,   \dots,    T   \right\}}
    q_{\boldsymbol{\rho}, \boldsymbol{\pi}}  (t) \le \sum_{t
      \in     \left\{     1,      \dots,     T     \right\}}
    q_{\boldsymbol{\rho},   \boldsymbol{\pi}'}    \left(   t
    \right).
  \end{align*}
  for any  $T \in  \{ 1,  \dots, |  \mathcal{M} |  \}$, with
  equality  if  and   only  if  $p_{p,  \boldsymbol{\sigma},
    \boldsymbol{\pi}}    (\mathbf{n},    \cdot)   =    p_{p,
    \boldsymbol{\sigma},   \boldsymbol{\pi}'}   (\mathbf{n},
  \cdot)$ for  any numbering  $\mathbf{n} \in  \mathcal{N} (
  \mathcal{M} )$.  Therefore, the statement follows.
\end{proof}

\subsection{The qubit case}
\label{sec:qubit}

Finally, in this section we  show that the maximization over
numbering--valued measurements can  be solved in closed-form
for  any  given  qubit   c-q  channel,  if  the  probability
distribution is  uniform and the cost  function is balanced.
For any  finite set  $\mathcal{M}$, any non  increasing cost
function  $\gamma :  \{  1,  \dots |  \mathcal{M}  | \}  \to
\mathbb{R}$  is  balanced  if   and  only  if  $\gamma(t)  +
\gamma(|\mathcal{M}| + 1 - t) = 2 \overline{\gamma}$ for any
$t \in \{1, \dots |\mathcal{M}\}$.

Among  balanced cost  functions,  it will  be convenient  to
restrict  to  those  whose   average  is  null.   That  this
restriction  comes without  loss  of generality  immediately
follows from the fact that
\begin{align*}
  G_{\min}^{\gamma} \left(  \left| \mathcal{M} \right|^{-1},
  \sigma \right)  = \overline{\gamma}  - G_{\min}^{\gamma_0}
  \left( \left| \mathcal{M} \right|^{-1}, \sigma \right),
\end{align*}
where  $\gamma_0 :=  \gamma  -  \overline{\gamma}$ has  null
average.

Before proceeding, we  need to define a  family of operators
that  will  allow  us   to  reframe  the  optimization  over
numbering-valued  measurements   as  a   quantum  hypothesis
testing problem.  For any  finite set $\mathcal{M}$, any non
decreasing balanced  cost function $\gamma  : \{ 1,  \dots |
\mathcal{M}|  \}  \to  \mathbb{R}$ with  null  average,  any
finite-dimensional  Hilbert  space  $\mathcal{H}$,  any  c-q
channel $\boldsymbol{\sigma} \in  \mathcal{C} ( \mathcal{M},
\mathcal{H}  )$,  let $E_{\boldsymbol{\sigma}}^{\gamma}$  be
the function given given by
\begin{align*}
  E_{\boldsymbol{\sigma}}^{\gamma} \left( \mathbf{n} \right)
  :=   \frac{2}{\left|   \mathcal{M}  \right|}   \sum_{t   =
    1}^{\left| \mathcal{M} \right|}  \gamma \left( t \right)
  \boldsymbol{\sigma}  \left(  \mathbf{n} \left(  t  \right)
  \right),
\end{align*}
for any numbering $\mathbf{n} \in \mathcal{N}_{\mathcal{M}}$.

We  are  now  in  a  position  to  introduce  a  branch  and
bound~\cite{LD60,  LMSK63,  Jen99}  (BB) algorithm  for  the
closed-form computation of the guesswork of qubit ensembles.
A branch and bound algorithm maximizes an objective function
over a  feasible set  by recursively splitting  the feasible
set into subsets, then  minimizing the objective function on
such subsets;  the splitting is called  branching.  For each
such subset, the  algorithm computes a bound  on the maximum
it is trying to find, and  uses such bounds to ``prune'' the
search space, eliminating the subsets that cannot contain an
optimal solution.

For  any $\mathcal{N}  \subseteq \mathcal{N}_{\mathcal{M}}$,
let us define
\begin{align*}
  t^*_{\mathcal{N}}     :=      \argmax_{\substack{t     \in
      \mathbb{N}\\\mathbf{n} \left( t  \right) = \mathbf{n}'
      \left( t  \right), \; \forall  \mathbf{n}, \mathbf{n}'
      \in \mathcal{N}}} t + 1,
\end{align*}
and
\begin{align*}
  \mathcal{M}^*_{\mathcal{N}}         :=         \mathcal{M}
  \setminus     \begin{cases}    \bigcup_{\mathbf{n}     \in
      \mathcal{N}}  \left\{  \mathbf{n}  \left(  t  \right),
    \mathbf{n} \left( \left| \mathcal{M} \right| - t \right)
    \right\}_{t  =  1}^{t^*_{\mathcal{N}}}   &  \textrm  {if
      $\boldsymbol{\sigma}$  CS},\\ \bigcup_{\mathbf{n}  \in
      \mathcal{N}}  \left\{  \mathbf{n}   \left(  t  \right)
    \right\}_{t    =   1}^{t^*_{\mathcal{N}}}    &   \textrm
            {otherwise}.
  \end{cases}
\end{align*}

\begin{dfn}[BB algorithm]
  \label{dfn:bb}
  For  any  finite  set   $\mathcal{M}$,  any  balanced  non
  increasing cost function $\gamma : \{ 1, \dots \mathcal{M}
  \} \to \mathbb{R}$ with  null average, any two-dimensional
  Hilbert  space   $\mathcal{H}$,  any   group  $\mathcal{G}
  \subseteq     \mathcal{S}_{\mathcal{M}}$,      and     any
  $\mathcal{G}$-covariant c-q channel $\boldsymbol{\sigma} (
  \mathcal{M},  \mathcal{H}   )$,  let  us  define   the  BB
  algorithm   given  by   the  objective   function  $\|   E
  _{\boldsymbol{\sigma}}^{\gamma}  \left(\cdot \right)  \|$,
  the   feasible  set   $\brn^j   (  \mathcal{N}^{\left(   0
    \right)}_{\mathcal{M}}  ) (  m )$  for arbitrary  $m \in
  \mathcal{M}$, where $j = 1$ if $\mathcal{G}$ is transitive
  and $j = 0$ otherwise and
  \begin{align*}
    \mathcal{N}^{\left(       0       \right)}_{\mathcal{M}}
    := \begin{cases} \left\{ \mathbf{n}  \in \mathcal{N}_{\mathcal{M}}
      \Big|   \sigma
      \left(  \mathbf{n}  \left(  \cdot  \right)  \right)  +
      \sigma   \left(  \overline{\mathbf{n}}   \left(  \cdot
      \right) \right)  = \openone \right\},  & \textrm{if
        $\boldsymbol{\sigma}$                             is
        CS,}\\          \mathcal{N}_{\mathcal{M}}          &
      \textrm{otherwise,}
    \end{cases}
  \end{align*}
  the branching rule
  \begin{align*}
    \brn(\mathcal{N})   :=  \left\{   \mathcal{N}_m  \subseteq
    \mathcal{N}    \Big|    \mathbf{n}    \in    \mathcal{N}_m
    \Leftrightarrow \mathbf{n}\left(t^*_{\mathcal{N}}\right) =
    m \right\}_{m \in \mathcal{M}^*_{ \mathcal{N} }},
  \end{align*}
  and the bounding rule
  \begin{align}
    \label{eq:bound}
    \bnd(\mathcal{N}) :=  \left(k + 1\right)  \left( \left\|
    \sum_{t  = 1}^{t^*_{\mathcal{N}}  - 1}  \gamma \left(  t
    \right) \boldsymbol{\sigma}  \left( \mathbf{n}  \left( t
    \right)     \right)     \right\|     +     \sum_{t     =
      t^*_{\mathcal{N}}}^{\frac{\left|  \mathcal{M} \right|}{2k}}
    \gamma \left( t \right) \right),
  \end{align}
  for arbitrary $\mathbf{n} \in  \mathcal{N}$, where $k = 1$
  if $\boldsymbol{\sigma}$ is CS and $k = 0$ otherwise.
\end{dfn}

\begin{thm}
  \label{thm:bb}
  For  any  finite  set   $\mathcal{M}$,  any  balanced  non
  increasing cost function $\gamma : \{ 1, \dots \mathcal{M}
  \}  \to  \mathbb{R}$,  any two-dimensional  Hilbert  space
  $\mathcal{H}$,    any    group   $\mathcal{G}    \subseteq
  \mathcal{S}_{\mathcal{M}}$,             and            any
  $\mathcal{G}$-covariant c-q channel $\boldsymbol{\sigma} (
  \mathcal{M},   \mathcal{H}  )$,   the   BB  algorithm   in
  Definition~\ref{dfn:bb}    computes     the    closed-form
  expression  of   the  guesswork  $G_{\min}^{\gamma}   (  |
  \mathcal{M}   |^{-1},   \boldsymbol{\sigma}  )$   on   $C$
  computing units in finite  time $T_C ( \boldsymbol{\sigma}
  )$ given by
  \begin{align*}
    T_C \left( \boldsymbol{\sigma} \right) \le \frac1C
    \begin{cases}
      \left(  \left| \mathcal{M}  \right|  -  1 \right)!   &
      \textrm{if  $\mathcal{G}$  is transitive,}  \\  \left|
      \mathcal{M} \right|!!   & \textrm{if  $\mathcal{G}$ is
        CS,}\\  \left(   \left|  \mathcal{M}  \right|   -  2
      \right)!! & \textrm{if $\mathcal{G}$ is transitive and
        CS,}\\     \left|     \mathcal{M}    \right|!      &
      \textrm{otherwise,}
    \end{cases}
  \end{align*}
  where $( \cdot )!!$ denotes the doubly factorial function.
\end{thm}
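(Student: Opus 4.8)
The plan is to establish the two assertions of Theorem~\ref{thm:bb} separately: the \emph{correctness}, that the branch-and-bound (BB) algorithm of Definition~\ref{dfn:bb} returns $G_{\min}^{\gamma}(|\mathcal{M}|^{-1}, \boldsymbol{\sigma})$ in closed form, and the \emph{complexity}, that $T_C(\boldsymbol{\sigma})$ obeys the stated bounds.

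First I would reduce the minimization over numbering-valued measurements to a purely combinatorial maximization. Using the shift identity recorded above, I may assume without loss of generality that $\gamma$ has null average; with the uniform prior the guesswork then becomes $G^{\gamma}(|\mathcal{M}|^{-1}, \boldsymbol{\sigma}, \boldsymbol{\pi}) = \frac12 \sum_{\mathbf{n}} \Tr[E_{\boldsymbol{\sigma}}^{\gamma}(\mathbf{n}) \boldsymbol{\pi}(\mathbf{n})]$. Because $\gamma$ has null average each $E_{\boldsymbol{\sigma}}^{\gamma}(\mathbf{n})$ is traceless, hence for the qubit has eigenvalues $\pm \|E_{\boldsymbol{\sigma}}^{\gamma}(\mathbf{n})\|$; moreover, reversing a numbering together with the balanced, null-average property of $\gamma$ sends $E_{\boldsymbol{\sigma}}^{\gamma}(\mathbf{n})$ to $-E_{\boldsymbol{\sigma}}^{\gamma}(\mathbf{n})$, so the Bloch vectors $\{E_{\boldsymbol{\sigma}}^{\gamma}(\mathbf{n})\}_{\mathbf{n}}$ form a centrally symmetric set. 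Dualizing the minimization over POVMs (its minimum equals the largest trace of a Hermitian operator dominated by every $E_{\boldsymbol{\sigma}}^{\gamma}(\mathbf{n})$) identifies it with minus twice the radius of the smallest enclosing ball of those Bloch vectors; central symmetry centers that ball at the origin, so, up to the additive $\overline{\gamma}$ shift, $G_{\min}^{\gamma}(|\mathcal{M}|^{-1}, \boldsymbol{\sigma}) = -\max_{\mathbf{n}} \|E_{\boldsymbol{\sigma}}^{\gamma}(\mathbf{n})\|$. This is the quantum-hypothesis-testing reframing of Ref.~\cite{Dal23} (distinguishing a numbering from its reversal), and it turns the problem into maximizing the objective $\|E_{\boldsymbol{\sigma}}^{\gamma}(\cdot)\|$ over $\mathcal{N}_{\mathcal{M}}$, which is exactly the objective of the BB algorithm.

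Next I would verify the three ingredients that make the branch-and-bound exact. (i) The rule $\brn(\mathcal{N})$ partitions $\mathcal{N}$: it fixes the value at the first still-undetermined position $t^*_{\mathcal{N}}$ to each remaining admissible $m \in \mathcal{M}^*_{\mathcal{N}}$, so the subsets are disjoint and exhaustive, and iterating reaches singleton leaves on which the objective equals $\|E_{\boldsymbol{\sigma}}^{\gamma}(\mathbf{n})\|$ exactly. (ii) The rule $\bnd(\mathcal{N})$ is a valid upper bound on $\max_{\mathbf{n} \in \mathcal{N}} \|E_{\boldsymbol{\sigma}}^{\gamma}(\mathbf{n})\|$: splitting the defining sum at $t^*_{\mathcal{N}}$ into the fixed prefix common to all $\mathbf{n} \in \mathcal{N}$ and the free suffix, the triangle inequality together with $\|\boldsymbol{\sigma}(m)\| \le \Tr[\boldsymbol{\sigma}(m)] = 1$ bounds the suffix by $\sum_{t \ge t^*_{\mathcal{N}}} \gamma(t)$, while in the CS case the relation $\boldsymbol{\sigma}(\overline{m}) = \openone - \boldsymbol{\sigma}(m)$ pairs the two halves of the numbering, producing the prefactor $k+1 = 2$ and the halved range $|\mathcal{M}|/2$. (iii) The symmetry-restricted feasible set is lossless: by the covariance Lemma~\ref{lmm:covariance}, transitivity of $\mathcal{G}$ supplies a maximizer with prescribed first value, so initializing $\brn^1$ at an arbitrary $m$ discards no optimum, and applying the same lemma to the $g$ with $R_g(\cdot) = \openone - (\cdot)$ supplies, in the CS case, a maximizer respecting the complement pairing, i.e.\ lying in $\mathcal{N}^{(0)}_{\mathcal{M}}$. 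With a valid bound and exhaustive branching, standard branch-and-bound theory guarantees that pruning never removes an optimal leaf, so the algorithm returns $\max_{\mathbf{n}} \|E_{\boldsymbol{\sigma}}^{\gamma}(\mathbf{n})\|$, hence $G_{\min}^{\gamma}$; closed-formness follows since every computed quantity is a finite sum of the given data or the norm of an explicit $2 \times 2$ Hermitian matrix. For the complexity I would count the leaves of the worst-case (un-pruned) tree in each regime and divide by the $C$ units: generically the $|\mathcal{M}|!$ numberings; with transitivity a factor $|\mathcal{M}|$ is removed, giving $(|\mathcal{M}|-1)!$; with central symmetry the admissible values are selected $|\mathcal{M}|, |\mathcal{M}|-2, |\mathcal{M}|-4, \dots$ at a time (each choice forbidding a value and its complement), giving $|\mathcal{M}|!!$; with both, the first value and its complement are removed at the outset, giving $(|\mathcal{M}|-2)!!$. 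Since pruning only decreases the explored nodes and the tree is finite, $T_C(\boldsymbol{\sigma})$ is finite and bounded by $1/C$ times the corresponding leaf count.

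The main obstacle I expect lies in steps (ii) and (iii) in the CS case: showing that the prefactor $k+1$ and the halved summation range in $\bnd$ genuinely yield an upper bound, and that the restriction to $\mathcal{N}^{(0)}_{\mathcal{M}}$ via Lemma~\ref{lmm:covariance} truly preserves the optimum. Both hinge on a careful use of the complement relation $\boldsymbol{\sigma}(\overline{m}) = \openone - \boldsymbol{\sigma}(m)$ and of the pairing of positions $t$ and $|\mathcal{M}|+1-t$ forced by the balanced cost. By contrast, the reframing to $\max_{\mathbf{n}} \|E_{\boldsymbol{\sigma}}^{\gamma}(\mathbf{n})\|$ is essentially inherited from Ref.~\cite{Dal23}, and the leaf-counting is routine combinatorics.
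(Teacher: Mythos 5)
Your proposal is correct and follows essentially the same route as the paper: reduction via Ref.~\cite{Dal23} to maximizing $\| E_{\boldsymbol{\sigma}}^{\gamma}(\cdot)\|$ over numberings, enumeration of the symmetry-restricted permutation tree, leaf counting in each of the four regimes, and division by $C$ for the embarrassingly parallel speedup. The only notable divergences are that you justify the centrally symmetric restriction of the feasible set via Lemma~\ref{lmm:covariance} where the paper invokes Lemma~\ref{lmm:bayes}, and that you additionally verify the validity of the bounding rule~\eqref{eq:bound}, a step the paper leaves implicit.
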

  
\begin{proof}
  Due to Theorem~1 and  Corollary~1 of Ref.~\cite{Dal23} one
  has
  \begin{align*}
    G_{\min}^{\gamma}     \left(      \left|     \mathcal{M}
    \right|^{-1},  \boldsymbol{\sigma} \right)  = -  \frac12
    \max_{\mathbf{n} \in  \mathcal{N}_{\mathcal{M}}} \left\|
    E_{p, \boldsymbol{\sigma}}^{\gamma}  \left( \mathbf{n}^*
    \right) \right\|.
  \end{align*}
  
  As  observed  in  Ref.~\cite{Dal23},  this  represents  an
  instance of the  quadratic assignment problem~\cite{Cel98}
  (QAP) that can be solved  by a BB algorithm.  Permutations
  can be generated as the leaves  of a tree, as shown by the
  following tree diagram  for the case $\mathcal{M}  = \{ a,
  b, c \}$.
  \begin{align*}
    \Tree[.{$\cdot \cdot \cdot$} [.{$a \cdot \cdot$} [.{$a b
            \cdot$} [.{$a b c$} ]  ] [.{$a c \cdot$} [.{$a c
              b$} ] ] ] [.{$b  \cdot \cdot$} [.{$b a \cdot$}
          [.{$b a c$} ] ] [.{$b c  \cdot$} [.{$b c a$} ] ] ]
      [.{$c \cdot  \cdot$} [.{$c a  \cdot$} [.{$c a b$}  ] ]
        [.{$c b \cdot$} [.{$c b a$} ] ] ] ]
  \end{align*}
  Hence   the   complexity   for   the   general   case   is
  $|\mathcal{M}|!$.
  
  However,  in  the  presence  of  symmetries  this  can  be
  improved  upon  by  observing that  the  operators  $E_{p,
    \boldsymbol{\sigma}}^{\gamma}$'s inherit  the covariance
  of the c-q channel  $\boldsymbol{\sigma}$ under the action
  of the  statistical-morphism representation  of transitive
  group $\mathcal{G}$, that is
  \begin{align*}
    R_g  \circ E_{p,  \boldsymbol{\sigma}}^{\gamma} =  E_{p,
      \boldsymbol{\sigma}}^{\gamma}  \left(  g  \circ  \cdot
    \right).
  \end{align*}
  
  Hence,  if $\mathcal{G}$  is  transitive, for  any $m  \in
  \mathcal{M}$,   there   exists    an   optimal   numbering
  $\mathbf{n}^*$ such that  $\mathbf{n}^* ( 1 )  = m^*$, for
  any $m^*  \in \mathcal{M}$.  Permutations  satisfying this
  condition can be  generated as the leaves of  a tree whose
  root satisfies  the condition,  as shown by  the following
  tree diagram, this time for  the case $\mathcal{M} = \{ a,
  b, c, d\}$ and $m^* = a$.
  \begin{align*}
    \Tree[.{$a  \cdot \cdot  \cdot$} [.{$a  b \cdot  \cdot$}
        [.{$a  b c  \cdot$} [.{$a  b  c d$}  ] ]  [.{$a b  d
            \cdot$}  [.{$a b  d  c$}  ] ]  ]  [.{$a c  \cdot
          \cdot$} [.{$a c b \cdot$} [.{$a  c b d$} ] ] [.{$a
            c d  \cdot$} [.{$a c d  b$} ] ] ]  [.{$a d \cdot
          \cdot$} [.{$a d b \cdot$} [.{$a  d b c$} ] ] [.{$a
            d c \cdot$} [.{$a d c b$} ] ] ] ]
  \end{align*}
  Hence  the complexity  in this  case is  $(|\mathcal{M}| -
  1)!$.

  Moreover,    if    $\mathcal{G}$    is    CS,    due    to
  Lemma~\ref{lmm:bayes}  there exists  an optimal  numbering
  $\mathbf{n}^*$ such that
  \begin{align*}
    \boldsymbol{\sigma}  \left(  \mathbf{n}^*  \left(  \cdot
    \right)    \right)    +    \boldsymbol{\sigma}    \left(
    \mathbf{n}^*  \circ  \sigma^{-1}_{\gamma}  \left(  \cdot
    \right) \right) = \openone.
  \end{align*}  
  Once again, permutations satisfying  this condition can be
  generated as  the leaves of a  tree if the $t$-th  and the
  $(|\mathcal{M}|  + 1  - t)$-th  states are  fixed at  each
  branch, as shown by the  following tree diagram, this time
  for the case $\mathcal{M} = \{ a, \bar{a}, b, \bar{b} \}$.
  \begin{align*}
    \Tree[.{$\cdot  \cdot \cdot  \cdot$}  [.{$a \cdot  \cdot
          \bar{a}$} [.{$a  b \bar{b}  b$} ] [.{$a  \bar{b} b
            \bar{a}$}  ]  ]   [.{$\bar{a}  \cdot  \cdot  a$}
        [.{$\bar{a} b  \bar{b} a$}  ] [.{$\bar{a}  \bar{b} b
            a$}  ] ]  [.{$b  \cdot \cdot  \bar{b}$} [.{$b  a
            \bar{a} \bar{b}$} ] [.{$b \bar{a} a \bar{b}$} ]]
      [.{$\bar{b} \cdot \cdot b$}  [.{$\bar{b} a \bar{a} b$}
        ] [.{$\bar{b} \bar{a} a b$} ] ] ]
  \end{align*}
  Hence the  complexity in  this case  is $|\mathcal{M}|!!$.
  
  Moreover, for a transitive and CS $\mathcal{G}$, combining
  the two results above the complexity becomes $|\mathcal{M}
  - 2|!!$,   attainable  by   generating  the   permutations
  according to the following  tree, again for $\mathcal{M} =
  \{ a, \bar{a}, b, \bar{b}\}$.
  \begin{align*}
    \Tree[.{$a \cdot  \cdot \bar{a}$} [.{$a b  \bar{b} b$} ]
      [.{$a \bar{b} b \bar{a}$} ] ]
  \end{align*}

  Finally, each  of the  trees above represents  a so-called
  embarrassingly parallel  problem, that is,  the computation
  in each branch  is independent of that  in other branches,
  so the  computation in different branches  can be assigned
  to  different computing  unit  with speedup  given by  the
  number of such units.
\end{proof}

A parallel  implementation in the C  programming language of
the BB  algorithm given  by Definition~\ref{dfn:bb}  is made
available under a free software license~\cite{BND23}.

We  are  now  in  a  position  to  compute  the  closed-form
expression of    the    guesswork   of    highly    symmetric,
informationally complete (HSIC) qubit c-q channels.

For  any finite  set $\mathcal{M}$  and any  two-dimensional
Hilbert     space    $\mathcal{H}$,     a    c-q     channel
$\boldsymbol{\sigma}   \in    \mathcal{C}   (   \mathcal{M},
\mathcal{H})$ is  highly-symmetric, informationally complete
(HSIC)~\cite{SS16} if  and only if  it is injective  and its
image $\boldsymbol{\sigma}  ( \mathcal{M} )$  corresponds in
the   Bloch   sphere   to  a   convex   regular   polyhedron
(tetrahedron,    octahedron,     cube,    icosahedron,    or
dodecahedron), the cuboctahedron, or the icosidodecahedron.

\begin{cor}[Guesswork of HSIC qubit c-q channels]
  \label{cor:hsic}
  For  any  finite  set $\mathcal{M}$,  any  two-dimensional
  Hilbert  space $\mathcal{H}$,  and  any  HSIC c-q  channel
  $\boldsymbol{\sigma}   \in   \mathcal{C}  (   \mathcal{M},
  \mathcal{H})$,     the    guesswork     $G_{\max\min}    (
  \boldsymbol{\rho} )$  for identity cost  function $\gamma(
  \cdot ) = \cdot$ is given by
  \begin{align*}
    G_{\max\min}    \left(    \boldsymbol{\sigma}    \right)
    = \begin{cases} \frac52 - \frac{\sqrt{15}}{6} \sim 1.9 &
      \left|   \mathcal{M}   \right|   =  4,\\   \frac72   -
      \frac{\sqrt{35}}6  \sim   2.5  &   \left|  \mathcal{M}
      \right| =  6,\\ \frac92 - \frac{\sqrt{7}}2  \sim 3.2 &
      \left|  \mathcal{M}   \right|  =  8,\\   \frac{13}2  -
      \frac{\sqrt{110 \left( 65  + 29 \sqrt{5} \right)}}{60}
      \sim    4.5   &    \left|   \mathcal{M}    \right|   =
      12,\\  \frac{21}2 -  \frac{\sqrt{6 \left(  3321 +  1483
          \sqrt{5}   \right)  }}{60}   \sim  7.2   &  \left|
      \mathcal{M} \right| = 20.
    \end{cases}
  \end{align*}
  if  $\boldsymbol{\sigma} (  \mathcal{M}  )$  is a  regular
  convex polyhedron,
  \begin{align*}
    G_{\max\min}    \left(    \boldsymbol{\sigma}    \right)
    = \frac{13}2 -   \frac{\sqrt{570}}6 \sim 4.5,
  \end{align*}
  if   $\boldsymbol{\sigma}   (    \mathcal{M}   )$   is   a
  cuboctahedron, and
  \begin{align*}
    G_{\max\min}   \left(   \boldsymbol{\sigma}  \right)   =
    \frac{31}2 -  \frac{\sqrt{117490 +  52534 \sqrt{5}}}{30
      \sqrt{6 + 2 \sqrt{5}}} \sim 10.5,
  \end{align*}
  if   $\boldsymbol{\sigma}   (   \mathcal{M}   )$   is   an
  icosidodecahedron.
\end{cor}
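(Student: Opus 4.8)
The plan is to reduce $G_{\max\min}(\boldsymbol\sigma)$ to a single run of the branch and bound algorithm of Definition~\ref{dfn:bb} and then to evaluate that run in closed form for each of the seven HSIC configurations. Every HSIC channel is covariant under the vertex-transitive rotation group $\mathcal{G}$ of its polyhedron, so by Lemma~\ref{lmm:covariance} together with the transitivity remark following it the maximum in Eq.~\eqref{eq:gwmaximin} is attained at the uniform distribution, and invoking Lemma~\ref{lmm:minimax} one has $G_{\max\min}(\boldsymbol\sigma) = G_{\min}^{\gamma}(|\mathcal{M}|^{-1}, \boldsymbol\sigma)$. The identity cost $\gamma(t) = t$ is balanced with average $\overline\gamma = (|\mathcal{M}|+1)/2$, so subtracting $\overline\gamma$ yields the null-average cost $\gamma_0 = \gamma - \overline\gamma$ and Theorem~\ref{thm:bb} applies directly to $G_{\min}^{\gamma}(|\mathcal{M}|^{-1}, \boldsymbol\sigma)$.

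The next step is to make the objective geometric. Writing each qubit state as $\boldsymbol\sigma(m) = (\openone + \vec{r}_m \cdot \boldsymbol\tau)/2$ in terms of its Bloch vector $\vec{r}_m$ and the vector $\boldsymbol\tau$ of Pauli operators, and using $\sum_t \gamma_0(t) = 0$, the traceful part of $E_{\boldsymbol\sigma}^{\gamma_0}(\mathbf{n})$ cancels and one obtains $E_{\boldsymbol\sigma}^{\gamma_0}(\mathbf{n}) = |\mathcal{M}|^{-1}\, \vec{w}(\mathbf{n}) \cdot \boldsymbol\tau$ with $\vec{w}(\mathbf{n}) := \sum_t \gamma_0(t)\, \vec{r}_{\mathbf{n}(t)}$. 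Since $\| \vec{w} \cdot \boldsymbol\tau \|$ is proportional to the Euclidean length $| \vec{w} |$, maximizing the objective $\| E_{\boldsymbol\sigma}^{\gamma_0}(\cdot) \|$ over numberings is equivalent to maximizing $| \vec{w}(\mathbf{n}) |$, the exact prefactor relating $G_{\max\min}(\boldsymbol\sigma)$ to $\max_{\mathbf{n}} | \vec{w}(\mathbf{n}) |$ being supplied by Theorem~\ref{thm:bb}. This is precisely the quadratic-assignment instance solved by the BB tree.

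I would then run the algorithm with the symmetry reductions of Theorem~\ref{thm:bb}. For the tetrahedron (transitive but not CS) the Gram structure $\vec{r}_m \cdot \vec{r}_{m'} = -1/3$ for $m \neq m'$ makes $| \vec{w}(\mathbf{n}) |^2 = \tfrac{4}{3} \sum_t \gamma_0(t)^2$ independent of $\mathbf{n}$, fixing the value at once. The remaining six configurations are both transitive and CS; there Lemma~\ref{lmm:bayes} forces an optimal numbering to place antipodal Bloch vectors at positions $t$ and $|\mathcal{M}|+1-t$, so that $\boldsymbol\sigma(\mathbf{n}(t)) + \boldsymbol\sigma(\overline{\mathbf{n}}(t)) = \openone$, and since $\gamma_0(|\mathcal{M}|+1-t) = -\gamma_0(t)$ each antipodal pair contributes $2\gamma_0(t)\,\vec{r}_{\mathbf{n}(t)}$. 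The problem collapses to assigning the positive weights $\gamma_0(t)$ to the $|\mathcal{M}|/2$ antipodal axes together with a choice of representative per axis, which the BB tree of size $(|\mathcal{M}|-2)!!$ enumerates and prunes via Eq.~\eqref{eq:bound}. Substituting the exact vertex coordinates (which for the icosahedral families involve the golden ratio) into $| \vec{w}(\mathbf{n}^*) |$ then produces the tabulated surds.

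The main obstacle lies in the two largest configurations, the dodecahedron ($|\mathcal{M}| = 20$) and above all the icosidodecahedron ($|\mathcal{M}| = 30$, i.e. $15$ antipodal axes), where even the reduced double-factorial search space is enormous. Certifying that the numbering returned is globally optimal rests entirely on the bounding rule of Eq.~\eqref{eq:bound} pruning every surviving branch, which is feasible only through the parallel implementation of Ref.~\cite{BND23}; the residual task is then purely algebraic, namely evaluating $| \vec{w}(\mathbf{n}^*) |$ exactly in terms of the irrational vertex coordinates to recover the closed forms in the statement.
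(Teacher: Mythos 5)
Your proposal is correct and follows the same overall route as the paper: covariance (Lemma~\ref{lmm:covariance}) plus transitivity fixes the optimal prior to be uniform, and the remaining minimization over numbering-valued measurements is the quadratic-assignment instance $\max_{\mathbf{n}}\|E^{\gamma}_{\boldsymbol{\sigma}}(\mathbf{n})\|$ attacked by the branch and bound algorithm of Definition~\ref{dfn:bb}. The one substantive difference is in scope: the paper computes only the icosidodecahedral case this way (citing Ref.~\cite{DBK22} for the five Platonic solids and Ref.~\cite{DBK21} for the cuboctahedron, and noting that a greedy initialization plus roughly a day on $16$ cores certifies optimality over the $28!!$-sized search space), whereas you re-derive all seven cases from the same machinery. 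Your version is more self-contained, and your tetrahedron observation --- that the Gram structure $\vec{r}_m\cdot\vec{r}_{m'}=-1/3$ makes $|\vec{w}(\mathbf{n})|^2=\tfrac{4}{3}\sum_t\gamma_0(t)^2$ independent of the numbering, so no search is needed at all --- is a genuine closed-form shortcut the paper does not spell out; the antipodal-pairing reduction you describe for the centrally symmetric cases is exactly what underlies the $(|\mathcal{M}|-2)!!$ count in Theorem~\ref{thm:bb}. Both arguments ultimately rest on the same irreducible computational step for $|\mathcal{M}|=30$, so nothing essential is gained or lost either way; just be aware that the claimed equality $\|E^{\gamma_0}_{\boldsymbol{\sigma}}(\mathbf{n})\|\propto|\vec{w}(\mathbf{n})|$ needs the trace norm (eigenvalues $\pm|\vec{w}|/|\mathcal{M}|$) for the prefactor $-\tfrac12$ of Theorem~1 of Ref.~\cite{Dal23} to reproduce the tabulated surds.
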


\begin{proof}
  Due to Lemma~\ref{lmm:covariance}  and the transitivity of
  $\mathcal{G}$, the probability  distribution $p$ attaining
  the maximin guesswork  $G_{\max\min} ( \boldsymbol{\sigma}
  )$  is uniform,  that is,  $p  \left( m  \right) =  \left|
  \mathcal{M}  \right|^{-1}$ for  any  $m \in  \mathcal{M}$.
  The guesswork  of regular convex polyhedra  was derived in
  Ref.~\cite{DBK22}. The guesswork  of the cuboctahedron was
  derived  in  Ref.~\cite{DBK21}.  However,  the  techniques
  derived therein do not suffice  for the computation of the
  guesswork of  the remaining  HSIC qubit c-q  channel, that
  is,  the  icosidodecahedral  c-q  channel  (for  which  $|
  \mathcal{M} |  = 30$), since the  practical application of
  such techniques is  limited to $| \mathcal{M}  | \sim 24$.
  We obtained such a result  by applying the BB algorithm in
  Definition~\ref{dfn:bb}, whose computational complexity in
  this  case is  $T_C(\boldsymbol{\sigma})  \le 28!!/C  \sim
  10^{15}/C$. We  initialized the feasible solution  using a
  greedy  algorithm (whose  complexity is  quadratic), whose
  output  turned out  to be  already within  $1\%$ from  the
  optimal  solution.  For  $C  = 16$,  the computation  took
  around one day.
\end{proof}

Finally, we implemented the  minimum guesswork setup for the
icosidodecahedral c-q channel  and run it on  an IBM quantum
computer.  We  generated one  circuit for  each of  the $30$
states of the ensemble, and ran $4000$ shots for each of the
$30$  circuits  using  the \verb|ibmq_quito|  backend.   The
resulting           minimum           guesswork           is
$G_{\max\min}(\boldsymbol{\rho}) \sim 10.4$, which is within
$1\%$     from      the     ideal      minimum     guesswork
$G_{\max\min}(\boldsymbol{\rho})  \sim   10.5$  reported  in
Corollary~\ref{cor:hsic}.

\section{Conclusion and outlooks}
\label{sec:conclusion}

In this  work we  addressed the  problem of  the adversarial
guesswork in  the presence of quantum  side information. For
the arbitrary-dimensional case, we  proved that i) the order
in which the strategies of  the two parties are optimized is
irrelevant,  ii) that  each  optimization  corresponds to  a
convex  problem,  and  that  iii) that  covariances  of  the
problem, if  any, are recast  as covariances in  the optimal
strategies.   We  conclusively  settled the  qubit  case  by
deriving  a   BB  algorithm  for  the   computation  of  the
closed-form expression  of the guesswork, and  we applied it
to compute the guesswork of HSIC c-q channels.

The problem of  the guesswork for symmetric  c-q channels in
arbitrary dimension remains open.  The difficulty stems from
the   fact  that,   while   the  operators   $E^{\gamma}_{p,
  \boldsymbol{\sigma}}$'s   preserve   the   symmetries   of
$\boldsymbol{\sigma}$,  a  (factorially   large)  number  of
inequivalent   seeds  is   needed   to   generate  all   the
$E^{\gamma}_{p,      \boldsymbol{\sigma}}$'s     from      a
representation   of   $\mathcal{G}$,   and   thus   standard
techniques for  semidefinite programming in the  presence of
symmetries cannot be applied directly.

All in  all, the guesswork  represents a relatively  new and
unexplored operational quantifier  of information.  As such,
it   has  promising   applications  in   contexts  such   as
information-disturbance    relations,    where   it    could
potentially be used in place of well-established quantifiers
of   information  such   as  the   mutual  information;   in
majorization theory, where operational  concepts such as the
testing  regions,  typically  defined   in  terms  of  error
probability, could  be redefined  in terms of  guesswork; in
quantum  cryptographic  applications,   where  bounding  the
guesswork  of  a  communication   channel  could  allow  for
alternative security proofs; in  witnessing the violation of
Bell inequalities; finally, the application of the guesswork
could  be extended  from c-q  channels to  encompass quantum
channels and quantum combs.

\section{Acknowledgments}

M.~D. is grateful to  Alessandro Bisio and Francesco Buscemi
for insightful discussions. M.~D.  acknowledges support from
the   Department  of   Computer  Science   and  Engineering,
Toyohashi University  of Technology, from  the International
Research Unit of Quantum  Information, Kyoto University, and
from the JSPS KAKENHI grant number JP20K03774.

\end{document}